\definecolor{myyellow}{RGB}{255, 255, 0}
\definecolor{LightGreen}{HTML}{CCFFCC}
\definecolor{LightBlue}{HTML}{E6F3FF}
\tikzstyle{vertex}=[draw,fill=black!15,circle,minimum size=18pt,inner sep=0pt]
\newtheorem{theorem}{Theorem}
\newtheorem{corollary}{Corollary}[theorem]
\newtheorem{observation}{Observation}
\newtheorem{lemma}[theorem]{Lemma}
\newcommand{\N}{\mathbb{N}}
\newcommand{\V}{\mathcal{V}}
\newcommand{\dichi}{\overrightarrow{\chi}}
\def\c{\mathrm{\raisebox{.07cm}{$\chi$}}}
\title{Matrix Rationalization via Partial Orders}
\author{
    Agnès Totschnig\thanks{McGill University. \texttt{agnes.totschnig@mail.mcgill.ca}, supported by FRQNT Grant 332481.} \and 
    Rohit Vasishta\thanks{McGill University. \texttt{rohit.vasishta@mail.mcgill.ca}, supported by a McCall MacBain Finalist Award.} \and 
    Adrian Vetta\thanks{McGill University. \texttt{adrian.vetta@mcgill.ca}, supported by NSERC Discovery Grant 2022-04191.}
}
\date{}
\begin{document}

\maketitle


%
\begin{abstract}
A {\em preference matrix} $M$ has an entry for each pair of candidates 
in an election whose value $p_{ij}$ represents the proportion of voters that prefer candidate $i$ over candidate $j$.
The matrix is {\em rationalizable} if it is consistent with a set of voters whose preferences are total orders.
A celebrated open problem asks for a concise characterization of rationalizable preference matrices.
In this paper, we generalize this matrix rationalizability question and study when a 
preference matrix is consistent with a set of voters whose 
preferences are partial orders of width~$\alpha$.
The width (the maximum cardinality of an antichain) of the partial order is a natural measure of the rationality of a voter; indeed,
a partial order of width $1$ is a total order.
Our primary focus concerns the {\em rationality number}, the minimum width required to rationalize a preference matrix.
We present two main results. The first concerns the class of half-integral preference matrices, where we show the key parameter required in evaluating the rationality number is the chromatic number of the undirected {\em unanimity graph} associated with the preference matrix $M$.
The second concerns the class of integral preference matrices, where we show the key parameter now is the dichromatic number of the directed {\em voting graph} associated with~$M$.
\end{abstract}

\section{Introduction}

At the heart of macroeconomics is the concept of {\em choice}~\cite{BCD16}. For example, what should a consumer (respectively, producer) demand (respectively, supply) given market prices?
More generally, given a set of options to choose from an agent selects an option(s). The agent is considered {\em rational} if it always selects the best choice among the given options. 
Specifically, a rational agent has a {\em total order} over the entire collection of options and, presented with a subset of the options, always
chooses the option highest in the ordering.

\subsection{Rationalizable Choice Data}
But how can we evaluate whether or not agents are rational?
The simple answer is to test the data. Is a collection of observational
choice data consistent with decision-making by a group of rational agents?
A classical way to model this is via an election. 
Assume there are $n$ candidates over which we have pairwise choice data;
that is $p_{ij}$ is the proportion of voters that prefer candidate $i$ over candidate $j$. This induces a non-negative {\em preference matrix} 
$M = (p_{ij})_{i,j \in [n]}$, where $p_{ii}=0$ for each candidate $i$ and
$p_{ij}+p_{ji}=1$ for every pair of candidates $i$ and $j$.
Is this preference matrix compatible with an electorate of rational voters,
where each voter ranks the candidates via a total order?\\[0.1cm] 
\noindent{\sc The Matrix Rationalizability Problem:} Given a preference matrix $M$, does there exist a set of rational voters such that, for any pair of candidates, a random voter prefers $i$ over $j$ with probability exactly $p_{ij}$?\footnote{For irrational matrices one may ask if the matrix is compatible with a probability distribution over total orders.}\\[0.1cm]
\indent The fundamental value of this problem is highlighted by its importance in a wide range of disciplines. 
The problem and its variants have been studied in depth in mathematical psychology (primarily with respect to human decision making)~\cite{Fal78,Menezes,Suck16},
economics (w.r.t. econometrics and behavioural economics)~\cite{Houthakker50,Mar60,MR11,DP07},
social choice (w.r.t. voting theory)~\cite{MONTES2020},
operations research (w.r.t. consumer choice and advertising)~\cite{MR11,CCP21},
combinatorial optimization (w.r.t. geometry and integrality in mathematical programming)~\cite{Meg77,GJR83,Fishburn87}, and 
theoretical computer science (w.r.t. algorithms and computational complexity)~\cite{Bachmeier19}. 

Unfortunately, the problem is extremely difficult. A good characterization of rationalizable preference matrices has eluded the research community for over 60 years.
Consequently, the objectives of this paper are more modest, but more general.
Rather than study preference matrices that are compatible with 
a collection of total orders, we study preference matrices that 
are compatible with a collection of {\em partial orders}.
There are two major advantages to this approach. First, from a practical
perspective, the problem of non-existence is avoided. Every preference matrix is compatible with a set of voters with partial order preferences (see Observation~\ref{obs:exist}). Second, and more substantially, the use of partial
orders induces a natural measure of approximate rationality.
Specifically, a total order corresponds to a poset of width $1$, where the
width is the maximum cardinality of an antichain in the poset.
More generally, the smaller the width of a partial order the closer it is
to a total order. 
Intuitively, the smaller the width the more 
``decisive'' and rational the voter, in that it has strong preferences over larger sets of candidates. In contrast, the poset of a voter with higher width has a higher number of linear extensions; the voter is thus more ambiguous and less decisive. Consequently, we say that a voter whose preferences are given by a partial order of width at most $\alpha$ is $\alpha$-rational. Thus, a $1$-rational voter is rational. Further, we say that a preference matrix $M$ is 
$\alpha$-rationalizable if it can be explained by a set of voters who are $\alpha$-rational. Let's begin by formalizing exactly what this means
with examples.

\subsection{The Model}
A (strict) {\em partial order} $\succ$ over a set $[n]=\{1,2,\dots, n\}$ of candidates satisfies, for any $i,j,k \in [n]$, the following three properties:
\\
\\
\indent{\tt Irreflexivity:} {\sc not} $i\succ i$. \\
\indent{\tt Asymmetry:} If $i\succ j$, then {\sc not} $j\succ i$. \\
\indent{\tt Transitivity:} If $i\succ j$ and $j\succ k$, then $i\succ k$.\\ \\
We say that a pair of candidates $i$ and $j$ are {\em comparable} if either $i\succ j$ or $j\succ i$; else they are {\em incomparable}.
A partial order is a {\em total order} if every pair of candidates is comparable.

We assume each voter $v$ has a personal set of preferences given by a partial order $\succ_v$ over the candidates, where the voter {\em strongly prefers} candidate $i$ over $j$ if $i\succ_v j$ (we omit the subscript and write $i\succ j$ if the context is clear). We say the voter {\em weakly prefers} 
$i$ over $j$ if either $i\succ j$ (strongly prefers) or if $i$ and~$j$ are incomparable (voter is indifferent).
Recall that a partial order $\succ$ induces a {\em poset} $\mathcal{P}$ over the candidates.
A {\em chain} is a subset of candidates in $\mathcal{P}$ that induces a total order. An {\em antichain} in $\mathcal{P}$ is a subset of pairwise incomparable candidates. 
We say that the voter is {\em $\alpha$-rational} if the maximum cardinality of an antichain in its partial order is at most $\alpha$.

A preference matrix is a non-negative matrix $M=(p_{ij})$, where $p_{ii}=0$ and $p_{ij}+p_{ji}=1$ for all $i,j\in [n]$. 
A preference matrices $M$ is $\alpha$-rationalizable if there exists a set $\mathcal{V}$ of $\alpha$-rational voters such that, for any pair of candidates $i$ and $j$,\\
(i) at least a $p_{ij}$ fraction of the voters weakly prefer $i$ over $j$, \\
(ii) at least a $p_{ji}$ fraction of the voters weakly prefer $j$ over $i$. \\
One way to see that this definition accords with $M$ being compatible with the set of voters is via sampling. Suppose we take a large sample of the voters and ask them if they prefer $i$ or $j$. If the voter prefers $i$ over $j$ (namely, $i \succ j$) or vice versa then we insist the voter must declare truthfully. If the voter is indifferent between $i$ and $j$ then we allow the voter to choose either of them.  
Then, in the limit, it is feasible that exactly a $p_{ij}$ fraction of the voters (from the sample) state a preference for $i$ over $j$ if and only if (i) and (ii) hold. 

Observe that since $p_{ij}+p_{ji}=1$, a set of $\alpha$-rational voters
is {\bf not} compatible with $M$ if more than a $p_{ij}=1-p_{ji}$
fraction of the voters strongly prefer $i$ over $j$, or if
more than a $p_{ji}=1-p_{ij}$ fraction of the voters strongly prefer $j$ over $i$.
Consequently, we can encode (i) and (ii) as the following \textit{rationality constraints} for all $i \neq j \in [n]$:
\begin{equation}
\frac{\#\{v \in \V : v \text{ strongly prefers } i \text{ over } j\}}{|\mathcal{V}|} 
\ \leq\  p_{ij}\  \leq\  
\frac{\#\{v \in \V : v \text{ weakly prefers } i \text{ over } j\}}{|\mathcal{V}|} \tag{$\ast$}
    \label{Requirements}
\end{equation}
If $\mathcal{V}$ satisfies (\ref{Requirements}) for a preference matrix $M$ then we say $\mathcal{V}$ is {\em consistent} or {\em compatible} with $M$.
Note that we impose no restriction on the cardinality of $\mathcal{V}$,
we simply desire a voting set of any cardinality that $\alpha$-rationalizes $M$.  

We remark that if every voter only has strong preferences then
(i) and (ii) are equivalent to exactly a $p_{ij}$ fraction of the voters (strongly) preferring $i$ over $j$ (in accordance with {\sc The Matrix Rationalizability Problem}).
Of course, such voters have total order preferences and thus are $1$-rational and, in this case, $M$ would be $1$-rationalizable.
Naturally, in our general setting, we then desire the minimum $\alpha$ such that $M$ is $\alpha$-rationalizable; we call this
minimum the {\em rationality number} of $M$ and denote it by $\alpha(M)$. This induces the following decision problem:\\[0.1cm]
\noindent{\sc The Rationality Number Problem:} Given a preference matrix $M$ and a positive integer $k$, is the rationality number $\alpha(M)$ at most $k$?

\subsection{Examples}
We now present three examples to illustrate these concepts.

\noindent{\tt Example I:}
Consider the integral preference matrix $M$ shown in~Figure \ref{ex_integral}. Observe that there is a simple way to represent
an integral preference matrix by a {\em voting graph}, $D_M = (V, A)$,
whose vertices are the candidates and there is an arc from $i$ to $j$ if and only if $p_{ij}=1$. Thus, as illustrated, the voting graph for $M$ is simply the directed $3$-cycle. More generally an integral preference matrix corresponds to a
{\em tournament}, an orientation of the complete graph on $n$ vertices.

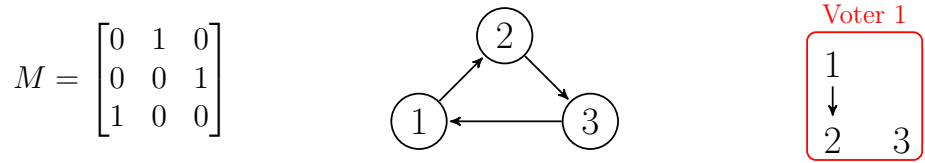
\begin{figure}[h]
\begin{minipage}{0.4\textwidth}
$$M = \begin{bmatrix}
    0 & 1 & 0 \\
    0 & 0 & 1 \\
    1 & 0 & 0
\end{bmatrix}$$
\end{minipage}
\begin{minipage}{0.3\textwidth}
\scalebox{0.85}{
    \begin{tikzpicture}[->,>=stealth',shorten >=1pt,auto,node distance=1.5cm,thick,
                    blank node/.style={circle,draw=black,font=\sffamily\Large\bfseries}]

      \node[blank node] (1) {$1$};
      \node[blank node] (2) [above right=1cm of 1] {$2$};
      \node[blank node] (3) [below right=1cm of 2] {$3$};
      \path
    (1) edge (2)
    (2) edge (3)
    (3) edge (1);
    \end{tikzpicture}}  
\end{minipage} 
\quad
\begin{minipage}{0.25\textwidth}
\scalebox{0.85}{
 \begin{tikzpicture}[->,>=stealth',shorten >=1pt,auto,node distance=2.5cm,thick,
                    blank node/.style={font=\sffamily\Large\bfseries}]

      \node[blank node] (1) {$1$};
      \node[blank node] (2) [below=0.5cm of 1] {$2$};
      \node[blank node] (3) [right=0.5cm of 2] {$3$};
      \path
    (1) edge (2);
     \node[red] () at (0.5,0.5) [above] {Voter 1};
        \draw[rounded corners,red] (-0.4, 0.5) rectangle (1.4, -1.5) {};
    \end{tikzpicture}}
\end{minipage} 
\caption{An integral preference matrix (and its voting graph $D_M$) that is $2$-rationalizable using a single voter.}
\label{ex_integral}
\end{figure}
The matrix $M$ is not rationalizable ($1$-rationalizable). This is because any voter with a total order preference list would strongly
prefer $j$ over $i$, for some arc $(i,j)\in G$, which is incompatible with the fact that $p_{ij}=1$. 
However, $M$ is $2$-rationalizable. Indeed it is compatible with
a single voter with partial order of width $2$ as shown by the red voter in~Figure \ref{ex_integral}.
Specifically, the voter prefers candidate $1$ over $2$ but is indifferent between candidates $1$ and $3$ and 
between candidates $2$ and $3$.

Let's verify that this voter does $2$-rationalize $M$.
We need to prove that the \textit{rationality constraints} are satisfied for every pair of voters. 
Observe $1\le p_{12}=1 \le 1$. Here the lower bound holds 
because the fraction of voters that strongly prefer $1$ over $2$ is 
one (as there is a single voter!).
Furthermore $0\le p_{13}=0 \le 1$ because the voter is indifferent between $1$ and $3$. Thus the fraction of voters that strongly prefer $1$ over $3$ is zero and the fraction that weakly 
prefer $1$ over $3$ is one. Similarly $0\le p_{23}=1 \le 1$.
We remark that if conditions (\ref{Requirements}) hold for $p_{ij}$ then they 
hold for $p_{ji}$. Thus the rationality constraints (\ref{Requirements}) are satisfied for every pair of candidates and $M$ is $2$-rationalizable.

\

\noindent{\tt Example II:}
Consider the half-integral preference matrix $M$ shown in Figure~\ref{ex_half-integral}. Again, we represent a half-integral preference matrix by a {\em voting graph}, $D_M =(V, A)$, where there is an arc from $i$ to $j$ if and only if $p_{ij}=1$. 
Thus if $p_{ij}=\frac12$ there is no arc (in either direction) between
$i$ and $j$. In~Figure~\ref{ex_half-integral}, the voting graph for $M$ is illustrated with a dashed line for the absence of arcs.
We similarly define an undirected {\em unanimity graph} $G_M = (V,E)$, which has an edge between $i$ and $j$ whenever $p_{ij} = 1$ or $p_{ji} = 1$. Thus it contains an edge for each pair of candidates for which the voters cannot strongly disagree. Note that it corresponds to the undirected version of the voting graph $D_M$.

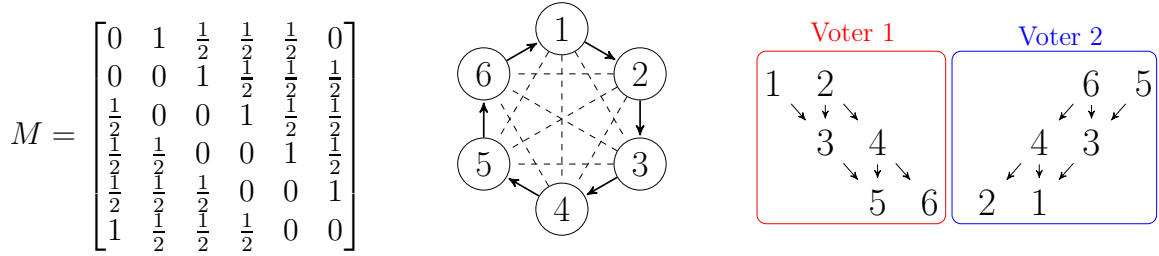
\begin{figure}[h]
\begin{minipage}{0.35\textwidth}
$$M = \begin{bmatrix}
    0 & 1 & \frac{1}{2}&\frac{1}{2} &\frac{1}{2} &0 \\
    0 & 0 & 1 &\frac{1}{2} &\frac{1}{2} &\frac{1}{2}\\
    \frac{1}{2} & 0 & 0 &1 &\frac{1}{2} &\frac{1}{2}\\
    \frac{1}{2}&\frac{1}{2} & 0 &0 &1 &\frac{1}{2}\\
    \frac{1}{2}&\frac{1}{2} &\frac{1}{2} & 0 &0 &1\\
    1&\frac{1}{2} &\frac{1}{2} &\frac{1}{2} & 0 &0
\end{bmatrix}$$
\end{minipage}
\hspace{0.3cm}
\begin{minipage}{0.23\textwidth}
\scalebox{0.8}{
    \begin{tikzpicture}[->,>=stealth',shorten >=1pt,auto,node distance=1.4cm,thin,
                    blank node/.style={circle,draw=black,font=\sffamily\Large\bfseries}]

      \node[blank node] (1) at (90:1.5) {$1$};
      \node[blank node] (2) at (30:1.5) {$2$};
      \node[blank node] (3) at (-30:1.5) {$3$};
      \node[blank node] (4) at (-90:1.5) {$4$};
      \node[blank node] (5) at (-150:1.5) {$5$};
      \node[blank node] (6) at (150:1.5) {$6$};
      \path
    (1) edge [thick] (2)
    (2) edge [thick](3)
    (3) edge [thick](4)
    (4) edge [thick](5)
    (5) edge [thick](6)
    (6) edge [thick](1)
    (1) edge [-,dashed] (3)
    (1) edge [-,dashed] (4)
    (1) edge [-,dashed] (5)
    (2) edge [-,dashed] (4)
    (2) edge [-,dashed] (5)
    (2) edge [-,dashed] (6)
    (3) edge [-,dashed] (6)
    (3) edge [-,dashed] (5)
    (4) edge [-,dashed] (6);
    \end{tikzpicture}}
\end{minipage}
\begin{minipage}{0.25\textwidth}
\scalebox{0.85}{
   \begin{tikzpicture}[->,>=stealth',shorten >=1pt,auto,node distance=2cm,thin,
                        blank node/.style={font=\sffamily\Large\bfseries}]

        \node[blank node] (1) at (0,0) {$1$};
        \node[blank node] (2) [right=.25cm of 1]  {$2$};
        \node[blank node] (3) [below=.25cm of 2] {$3$};
        \node[blank node] (4) [right=.25cm of 3] {$4$};
        \node[blank node] (5) [below=.25cm of 4] {$5$};
        \node[blank node] (6) [right=.25cm of 5] {$6$};
        \node[blank node] (7) at (5.8,0) {$5$};
        \node[blank node] (8) [left=.25cm of 7]  {$6$};
        \node[blank node] (9) [below=.25cm of 8] {$3$};
        \node[blank node] (10) [left=.25cm of 9] {$4$};
        \node[blank node] (11) [below=.25cm of 10] {$1$};
        \node[blank node] (12) [left=.25cm of 11] {$2$};

        \node[red] () at (1.25,0.8) {Voter 1};
        \draw[rounded corners, red] (-0.25, 0.5) rectangle (2.7, -2.2) {};
        \node[blue] () at (4.5,0.75) {Voter 2};
        \draw[rounded corners,blue] (2.8, 0.5) rectangle (6, -2.2) {};
        
        \path
            (1) edge (3)
            (2) edge (3)
            (2) edge (4)
            (3) edge (5)
            (4) edge (6)
            (4) edge (5)
            (7) edge (9)
            (8) edge (9)
            (8) edge (10)
            (9) edge (11)
            (10) edge (12)
            (10) edge (11);
    \end{tikzpicture}}
\end{minipage} 
\caption{A half-integral preference matrix (and its voting graph $D_M$) that is $2$-rationalizable using two voters.}
\label{ex_half-integral}
\end{figure}

The matrix $M$ is not rationalizable ($1$-rationalizable). Again, this is because the voting graph contains a directed cycle $C$ on the six candidates. Thus, any voter with a total order preference list would strongly prefer $j$ over $i$, for at least one arc $(i,j)\in C$, which is incompatible with the fact that $p_{ij}=1$. 

However, $M$ is $2$-rationalizable. But,
unlike in Example I, this requires at least two voters.
To see that one voter is insufficient, consider the three candidates $\{1,3,5\}$. For any $2$-rational voter $v$, at least two of these must be comparable in its partial order; otherwise they form an antichain
of cardinality $3$. Without loss of generality, let $1\succ_v 3$.
So if $v$ is the only voter then the proportion of voters that prefer
$1$ over $3$ is one. This contradicts condition (\ref{Requirements})
because $p_{13}=\frac{1}{2}$.

On the other hand, $M$ is compatible with two voters that are each $2$-rational as illustrated by the red and blue voters in Figure~\ref{ex_half-integral}.
For example, both the red and blue voters are indifferent between candidates $1$ and $2$. 
Thus the fraction of voters that strongly prefer $1$ over $2$ is zero and the fraction that weakly 
prefer $1$ over $2$ is one. Thus condition (\ref{Requirements})
holds for this pair as $0\le p_{12}=1 \le 1$. Next, the red voter prefers $1$ over $3$ but the blue voter prefers $3$ over $1$.
Thus the fraction of voters that strongly prefer $1$ over $3$ is half and the fraction that weakly 
prefer $1$ over $3$ is also half. Thus condition (\ref{Requirements})
holds for this pair as $\frac12\le p_{13}=\frac12 \le \frac12$.
Further, the red voter is indifferent between $1$ and $6$ but the blue voter prefers $6$ over $1$.
Thus the fraction of voters that strongly prefer $1$ over $6$ is zero and the fraction that weakly 
prefer $1$ over $6$ is half. Thus condition (\ref{Requirements})
holds for this pair as $0\le p_{16}= 0 \le \frac12$.
The reader may verify that the conditions (\ref{Requirements}) also hold for every other pair of candidates.
Hence $M$ is $2$-rationalizable, as claimed.

\

\noindent{\tt Example III:}
Finally consider the generic preference matrix $M$ with three candidates shown in Figure~\ref{ex_general}.
$M$ is $3$-rationalizable using a single voter whose partial order is
an antichain on all the candidates.
Observe that since the voter has no strict preference, the fraction
of voters that strongly prefer $i$ over $j$ is zero, for any pair of candidates. Similarly, the fraction of voters that weakly prefer $i$ over $j$ is one. Thus the rationality constraints (\ref{Requirements}) are simply
$0\le p_{ij} \le 1$ which are trivially satisfied. 
Thus $M$ is $3$-rationalizable.

\begin{figure}[h]
\begin{minipage}{0.55\textwidth}
$$M = \begin{bmatrix}
    0 & p_{12} & p_{13} \\
    1-p_{12} & 0 & p_{23} \\
   1-p_{13} & 1-p_{23} & 0
\end{bmatrix}$$
\end{minipage}
\hspace{0.5cm}
\begin{minipage}{0.4\textwidth}
\scalebox{0.85}{
 \begin{tikzpicture}[->,>=stealth',shorten >=1pt,auto,node distance=2.5cm,thick,
                    blank node/.style={font=\sffamily\Large\bfseries}]

      \node[blank node] (1) {$1$};
      \node[blank node] (2) [right=1cm of 1] {$2$};
      \node[blank node] (3) [right=1cm of 2] {$3$};

     \node[blue] () at (1.5,0.5) [above] {Voter 1};
        \draw[rounded corners,blue] (-0.5, 0.5) rectangle (3.75, -.5) {};
    \end{tikzpicture}}
\end{minipage} 
\caption{A generic preference matrix that is $3$-rationalizable using a single voter.}
\label{ex_general}
\end{figure}
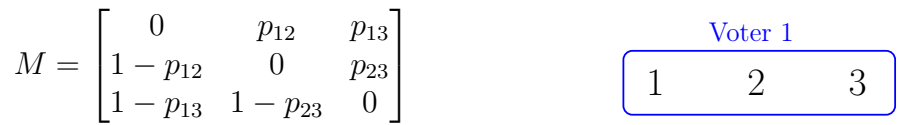
Of course, this example, trivially generalizes to any number $n$ of candidates. A single voter whose partial order is an antichain of size $n$ will $n$-rationalize any preference matrix with $n$ candidates.
\begin{observation}\label{obs:exist}
A preference matrix with $n$ candidates is $n$-rationalizable.
\end{observation}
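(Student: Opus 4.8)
The plan is to prove the statement constructively by exhibiting, for an arbitrary preference matrix $M$ on candidate set $[n]$, a single $n$-rational voter that is consistent with $M$. Following the reasoning of Example~III, I would take the voter $v$ whose partial order $\succ_v$ is empty, so that no pair of candidates is comparable under $\succ_v$. First I would check that this is a legitimate strict partial order: irreflexivity, asymmetry, and transitivity all hold vacuously since there are no comparable pairs. The poset it induces is an antichain on all of $[n]$, so its largest antichain has cardinality exactly $n$; hence $v$ is $n$-rational.

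Next I would verify that the singleton voting set $\mathcal{V} = \{v\}$ satisfies the rationality constraints~(\ref{Requirements}) for $M$. Fix a pair $i \neq j$. Since $v$ has no strict preferences, no voter strongly prefers $i$ over $j$, so the left-hand fraction in~(\ref{Requirements}) equals $0$. Since $i$ and $j$ are incomparable under $\succ_v$, voter $v$ is indifferent between them and therefore weakly prefers $i$ over $j$; so the right-hand fraction equals $1$. Thus~(\ref{Requirements}) reduces to $0 \le p_{ij} \le 1$, which holds because the entries of a preference matrix are nonnegative and $p_{ij} + p_{ji} = 1$ together with $p_{ji} \ge 0$ forces $p_{ij} \le 1$.

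Consequently $\mathcal{V} = \{v\}$ is consistent with $M$ and consists of an $n$-rational voter, so $M$ is $n$-rationalizable. There is no genuine obstacle here; the only point worth pausing on is the observation that the empty relation is a valid partial order and that its width is exactly $n$, which is immediate from the definitions.
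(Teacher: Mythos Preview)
Your proposal is correct and matches the paper's own argument: the paper justifies Observation~\ref{obs:exist} precisely by generalizing Example~III, taking a single voter whose partial order is the antichain on all $n$ candidates, so that (\ref{Requirements}) becomes the trivially satisfied bound $0 \le p_{ij} \le 1$. Your write-up simply spells out the details (that the empty relation is a valid partial order of width $n$ and that the constraints reduce to $0 \le p_{ij} \le 1$) a bit more explicitly than the paper does.
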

Observation~\ref{obs:exist} confirms the existence of a set of voters that $\alpha$-rationalize
a preference matrix $M$. Of course, our interest is whether or not the matrix is $\alpha$-rationalizable for some $\alpha$ much smaller than $n$.

\subsection{Our Results}
In Section~\ref{Sec:prelim} we present structural results
concerning $\alpha$-rationalizable preference matrices.
Then in Section~\ref{sec:half}, we focus on the important
class $\mathcal{M}^{\frac12}$ of half-integral preference matrices. Our first main result is that, for this class, the rationality number $\alpha(M)$ is bounded by the
chromatic number of the (undirected) unanimity graph $G_M$.
Specifically, we prove:

\begin{theorem}\label{thm:half-integral}
Let $\mathcal{M}^{\frac12}$ be the class of half-integral preference matrices. Then
$$\frac15 \c(G_M) \ \le_{\exists}\  \alpha(M) \ \le_{\forall}\ \c(G_M)$$
\end{theorem}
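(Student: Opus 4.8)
The plan is to prove the two inequalities separately, since they are of quite different character. The upper bound $\alpha(M)\le\c(G_M)$ holds for \emph{every} half-integral $M$ and comes from an explicit two-voter construction. The lower bound $\tfrac15\c(G_M)\le\alpha(M)$ cannot hold for all $M$ — a transitive tournament has $\c(G_M)=n$ yet $\alpha(M)=1$ — so it must be an extremal statement, witnessed by a suitable infinite family with $\c(G_M)\to\infty$; my plan is to reduce it to a purely graph-theoretic construction.

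\textbf{Upper bound.} Let $k=\c(G_M)$ and fix a proper $k$-colouring of $G_M$ with non-empty colour classes $C_1,\dots,C_k$. Define a voter $v$ whose poset is the disjoint union of $k$ chains: totally order each $C_\ell$ arbitrarily and leave candidates in distinct classes incomparable. This is a valid partial order of width exactly $k$ (picking one candidate per class gives an antichain, and Dilworth's theorem gives width $\le k$), and since every edge $ij$ of $G_M$ joins distinct classes, $i$ and $j$ are incomparable in $v$. Let $\bar v$ be the order-reversal of $v$, which also has width $k$, and set $\V=\{v,\bar v\}$. One checks $\V$ is consistent with $M$ case by case: if $p_{ij}=1$ then $ij\in E(G_M)$, so $i,j$ are incomparable in both voters, giving $0\le p_{ij}\le 1$; if $p_{ij}=\tfrac12$ with $i,j$ in the same class, say $i\succ_v j$, then $j\succ_{\bar v}i$, so exactly one voter strongly prefers $i$ over $j$ and exactly one weakly prefers $i$ over $j$, giving $\tfrac12\le p_{ij}\le\tfrac12$; if $p_{ij}=\tfrac12$ with $i,j$ in different classes, they are incomparable in both, giving $0\le p_{ij}\le 1$. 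The remaining cases follow by the $p_{ij}\leftrightarrow p_{ji}$ symmetry noted in the text. Hence $\alpha(M)\le k$.

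\textbf{Lower bound.} The key first ingredient is the general inequality $\alpha(M)\ \ge\ \dichi(D_M)$. To see it, take any consistent voter set $\V$ with all widths at most $\alpha$ and any $v\in\V$; by Dilworth, $V$ partitions into at most $\alpha$ chains of $v$. For one such chain $C$, every arc of $D_M$ inside $C$ is an edge $ij$ of $G_M$ with $p_{ij}=1$, so all voters — in particular $v$ — weakly prefer $i$ over $j$; as $i,j$ are comparable in $v$, this forces $i\succ_v j$, i.e.\ the arc points downward along the linear order of $C$, so $D_M[C]$ is acyclic. Thus the chain partition certifies $\dichi(D_M)\le\alpha$. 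It therefore suffices to exhibit half-integral matrices — equivalently oriented graphs $D=D_M$ with underlying graph $G=G_M$ — for which $\dichi(D)\ \ge\ \tfrac15\c(G)$ and $\c(G)\to\infty$; then $\alpha(M)\ge\dichi(D_M)\ge\tfrac15\c(G_M)$. I would construct these from the known oriented graphs with arbitrarily large directed girth and arbitrarily large dichromatic number (Neumann-Lara--type constructions), verifying that in a sufficiently sparse such host the chromatic number of the underlying undirected graph stays within a factor $5$ of the dichromatic number.

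\textbf{Where the difficulty lies.} The upper bound is routine. The real work is the extremal construction: one needs oriented graphs whose dichromatic number is a \emph{constant} fraction of the chromatic number of the underlying graph, which is impossible for dense hosts (a tournament on $n$ vertices has dichromatic number only $\Theta(n/\log n)$), so the family must be carefully sparsified, and pinning down the exact constant $5$ is the delicate point. A secondary subtlety is whether $\dichi(D_M)$ is already the right lower bound for such a family or whether the balancing constraints imposed by the $p_{ij}=\tfrac12$ entries must be exploited to force $\alpha(M)$ strictly above $\dichi(D_M)$; in the latter case the argument would have to track, across all voters at once, how the width-$\alpha$ chain decompositions interact with those constraints, which is the technically heavy alternative.
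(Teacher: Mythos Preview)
Your upper bound is correct and is exactly the paper's argument: two voters whose posets are the colour classes ordered as chains, one the reversal of the other.

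Your reduction $\alpha(M)\ge\dichi(D_M)$ is also correct and is, in effect, what the paper uses (it phrases it as ``no chain of a voter can contain a directed cycle of $D_M$'', which is the same thing). So the architecture of your lower bound is fine.

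The gap is the construction. You propose to take sparse, high-digirth oriented graphs with large dichromatic number and then \emph{verify} that $\c(G)\le 5\,\dichi(D)$; but you give no argument for that inequality, and for the standard Neumann-Lara/Erd\H{o}s-style constructions there is no reason to expect the undirected chromatic number to stay within a constant factor of the dichromatic number. Your diagnosis that tournaments fail is right, but your remedy --- sparsification --- is the wrong lesson. The problem with the tournament is not density; it is that $\c(K_n)=n$ grows with $n$ while $\dichi$ only reaches $n/\log n$. The paper's fix is the opposite of sparsification: it \emph{pins} $\c(G_M)=k$ by taking $G_M$ to be the complete $k$-partite graph with parts of size $n/k$, and then lets $n\to\infty$. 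A random orientation of this dense host has, with positive probability, the property that every vertex set of size $5n/k$ contains a directed triangle (proved by finding $(n/k)^2$ edge-disjoint triangles in any such set and applying a union bound). Hence every acyclic set has size less than $5n/k$, so $\dichi(D_M)\ge k/5$, and your own inequality finishes the job. The constant $5$ falls out of the triangle-packing count, not from any girth consideration.

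Your ``secondary subtlety'' is a non-issue here: the half-integral entries play no role in the lower bound beyond making $ij$ a non-edge of $G_M$, and the bound $\alpha(M)\ge\dichi(D_M)$ already suffices.
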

In order to concisely formulate our results, we use the notation $\leq_\exists$ and $\leq_\forall$.
Here $\leq_\exists$ means that for every $k \in \N$
{\em there exists} a preference matrix $M$ in that class (here, $\mathcal{M}^{\frac12}$) with chromatic number $k$ such that the inequality is satisfied, and
$\leq_\forall$ means that the inequality is satisfied {\em for every} preference matrix $M$ in that class.

Next, in Section~\ref{sec:integral}, we strengthen our results
for the class $\mathcal{M}^{0/1}$ of integral preference matrices. For this class, we prove that the rationality 
number $\alpha(M)$ is equal to the dichromatic number of the (directed) voting graph $D_M$. We use this to give even more precise bounds on the rationality number for the class of integral matrices. Specifically our second main result is:
\begin{theorem}\label{thm:integral}
Let $\mathcal{M}^{0/1}$ be the class of integral preference matrices. Then
$$\frac{\c(G_M)}{2\log n+1} \ \le_{\exists}\ \alpha(M) \ \le_{\forall}\  \frac{3\c(G_M)}{\log n}$$
\end{theorem}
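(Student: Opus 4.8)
The plan is to derive Theorem~\ref{thm:integral} from the identification of the rationality number with a digraph colouring parameter, together with two extremal facts about tournaments (logarithms are base~$2$ throughout). First, note that for an integral preference matrix every pair $i,j$ has $p_{ij}=1$ or $p_{ji}=1$, so the unanimity graph $G_M$ is the complete graph $K_n$ and hence $\c(G_M)=n$; thus the two claimed inequalities are exactly $\frac{n}{2\log n+1}\le_{\exists}\alpha(M)\le_{\forall}\frac{3n}{\log n}$. Second, I would invoke the characterization established in Section~\ref{sec:integral} that $\alpha(M)=\dichi(D_M)$, the dichromatic number of the tournament $D_M$, i.e.\ the least number of parts in a vertex partition of the candidates into sets each inducing an acyclic (equivalently transitive) subtournament. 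Informally this holds because for integral $M$ the only active rationality constraint is that no voter $v$ may have $j\succ_v i$ when $(i,j)\in A(D_M)$; hence a single voter whose partial order is contained in $A(D_M)$ suffices, and by Dilworth's theorem such a partial order of width $\alpha$ is precisely a cover of the candidates by $\alpha$ transitive subtournaments, the disjoint union of whose chains is itself a valid partial order (transitivity cannot fail across two chains that share no vertex).

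For the upper bound $\dichi(D)\le 3n/\log n$ for every $n$-vertex tournament $D$, I would use the classical fact that every tournament on $m$ vertices contains a transitive subtournament on at least $1+\lfloor\log m\rfloor$ vertices, and colour greedily. While at least $\sqrt n$ vertices remain, repeatedly extract a transitive subtournament — of size at least $\log\sqrt n=\tfrac12\log n$ — and assign it a fresh colour; this uses at most $2n/\log n$ colours. Once fewer than $\sqrt n$ vertices remain, give each its own colour, using at most $\sqrt n$ more. Since $\sqrt n\le n/\log n$ for all large $n$ (and for the finitely many small $n$ the trivial bound $\dichi(D)\le n\le 3n/\log n$ when $\log n\le 3$, plus direct inspection for the rest, suffices), the total is at most $3n/\log n$.

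For the lower bound I would run a first--moment argument on a uniformly random tournament $D$ on $n$ vertices. A fixed $s$-subset $S$ induces a transitive subtournament iff its orientation agrees with one of the $s!$ linear orders on $S$, an event of probability $s!\,2^{-\binom s2}$; hence the expected number of transitive $s$-subsets is at most $\binom ns s!\,2^{-\binom s2}\le n^s\,2^{-\binom s2}=2^{\,s\log n-\binom s2}$, which is strictly less than $1$ once $s\log n<\binom s2$, i.e.\ once $s>2\log n+1$. Therefore some tournament $D$ has no transitive subtournament of size exceeding $2\log n+1$; as each colour class of a $\dichi(D)$-colouring is an acyclic induced subdigraph, pigeonhole gives $\dichi(D)\ge n/(2\log n+1)$. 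Taking the integral matrix $M$ with $D_M=D$ (so $\c(G_M)=n$) yields the $\le_{\exists}$ direction.

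The main obstacle is not conceptual but bookkeeping the constants: in the greedy argument one must handle the switch-over at $\sqrt n$ and the small cases carefully to land on the factor $3$ exactly, and in the first--moment bound one must apply the union bound over precisely the range of $s$ that yields the threshold $2\log n+1$ rather than a slightly weaker value. A secondary point to write out fully, if it is not already available in exactly the needed form, is the equivalence $\alpha(M)=\dichi(D_M)$ — in particular the ``gluing'' direction, that a vertex partition into transitive subtournaments produces a single $\alpha$-rational voter consistent with $M$, which I would justify by checking transitivity of the disjoint union of the induced chains.
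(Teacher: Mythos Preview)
Your proposal is correct and follows essentially the same route as the paper: reduce to $\alpha(M)=\dichi(D_M)$, bound $\dichi$ above by greedily stripping logarithmic-size transitive subtournaments, and bound it below by the standard first-moment argument on a random tournament. The only cosmetic difference is in the upper-bound bookkeeping: you stop the greedy phase at $\sqrt{n}$ remaining vertices and colour the rest individually, whereas the paper stops at $n/\log n$ and sums the contributions over dyadic intervals $[n/2^i,\,n/2^{i-1}]$; both analyses land on the same constant $3$.
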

Note that for an integral preference matrix $M$ its unanimity graph $G_M$ is the complete graph and thus has chromatic number exactly $\c(G_M)=n$. 

We conclude, in Section~\ref{sec:complexity}, by showing that the rationality number problem is NP-complete, even for the class of integral matrices. 

\subsection{Literature Review}\label{sec:literature}

The matrix rationalizability problem asks when a binary probability system (that is, a preference matrix) corresponds to a distribution over total orders. This problem dates back to the 1950s and the works of Guilbaud~\cite{Gui53} and Marschak~\cite{Mar60}.
Early works proved that the triangle inequality\footnote{The triangle inequality states that $p_{ij}\le p_{ik} + p_{kj}$, for any three candidates $i,j$ and~$k$.} is a necessary and sufficient condition for a preference matrix to be rationalizable in the case of five or fewer candidates~\cite{Dri80,Fishburn87}.
But this characterization fails for six or more candidates
\cite{Meg77,Dri80,Fishburn87,Menezes},
leading to a search for other necessary conditions~\cite{GJR83,Rein85}.
Obtaining a concise characterization for matrix rationalizability remains an outstanding open problem.

{\em Systems of choice probabilities}
have been widely studied in mathematical psychology.
Specifically, in stochastic choice behaviour,
decision-makers must select an item $i$ when presented with 
a subset $S$ of the items. Binary probability systems are the special case where the subsets considered have size two.
Falmagne~\cite{Fal78} showed a complete system of choice probabilities to be induced by rankings whenever it satisfies the Block–Marschak conditions~\cite{BM60} and normalization equalities, the proof of which was simplified by Fiorini~\cite{Fio04}.

Closely related to the matrix rationalizability problem is the {\em linear ordering problem}, which asks for the total order that best approximates a given binary probability system.
For standard measures this problem is NP-hard~\cite{Garey1990}, but understanding the geometry of the polytope of rationalizable choice matrices aids in the development of heuristic and approximation algorithms~\cite{Fishburn92,MR11}. The linear ordering problem has many applications in data science~\cite{CCP21} and economics~\cite{GJR83}.

Also closely related to the matrix rationalizability problem
is the classic combinatorial {\em majority digraph problem}. Here a directed graph models the pairwise majority relation of a voter profile; that is, the arc $ij$ is included in the digraph whenever the majority of voters prefer candidate $i$ over candidate $j$. 
McGarvey~\cite{McGarvey53} showed that every asymmetric digraph corresponds to the majority relation of some voter profile. Stearns~\cite{Stearns59} and Erdős and Moser \cite{EM64} improved the bound on the number of voters required in such a voter profile to $\Theta(n/ \log n)$. Recently, Bachmeier et al.~\cite{Bachmeier19} characterized the digraphs induced by a constant number of voters and studied the computational complexity of determining the minimum number of voters required to induce a given digraph.

Various relaxations of rationalizability have been studied. One popular relaxation is {\em regularity}~\cite{DP07,Suck16}
which asks if a system of binary probabilities can be extended to a system of choice probabilities under which the probability of selecting an item from a set does not increase when that set expands.

\section{Preliminaries}\label{Sec:prelim}
We begin with a monotonicity property.
\begin{lemma}\label{lem:mono}
Let $M$ be a preference matrix consistent with $m$ voters with preferences $\{\succ_1,\dots,\succ_u,\dots,\succ_m\}$.
Then $M$ is consistent with $\{\succ_1,\dots,\succ'_u,\dots \succ_m\}$, where $\succ'_u$ is identical to $\succ_u$
except that voter $u$ prefers $x$ over $y$ in $\succ_u$ but 
is indifferent between $x$ and $y$ in $\succ'_v$.
\end{lemma}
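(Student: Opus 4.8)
The plan is to verify directly that the modified voting set $\mathcal{V}' = \{\succ_1,\dots,\succ'_u,\dots,\succ_m\}$ still satisfies the rationality constraints~(\ref{Requirements}) for every ordered pair of candidates $i \neq j$. The key observation is that relaxing a single strict preference $x \succ_u y$ to incomparability does not change the poset of any voter other than $u$, so the only pair whose statistics could possibly change is the pair $\{x,y\}$, and only through voter $u$.

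First I would fix an arbitrary pair $(i,j)$ and compare, between $\mathcal{V}$ and $\mathcal{V}'$, the two quantities appearing in~(\ref{Requirements}): the number of voters who strongly prefer $i$ over $j$, and the number who weakly prefer $i$ over $j$. For every voter $w \neq u$ these are unchanged. For voter $u$: if $\{i,j\} \neq \{x,y\}$, then since $\succ'_u$ agrees with $\succ_u$ except on the comparability of $x$ and $y$, voter $u$'s relation between $i$ and $j$ is also unchanged. So suppose $\{i,j\} = \{x,y\}$; without loss of generality $(i,j) = (x,y)$. Then in $\succ'_u$ voter $u$ no longer strongly prefers $x$ over $y$, so the count of strong $x$-over-$y$ preferences \emph{drops} by one (from $\mathcal{V}$ to $\mathcal{V}'$), while voter $u$ still weakly prefers $x$ over $y$ (now via indifference), so the count of weak $x$-over-$y$ preferences is unchanged. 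For the reverse pair $(y,x)$: voter $u$ did not strongly prefer $y$ over $x$ in $\succ_u$ (by asymmetry, since $x \succ_u y$) and still does not in $\succ'_u$, so that count is unchanged; and voter $u$ now weakly prefers $y$ over $x$ via indifference, whereas before it did not, so the weak $y$-over-$x$ count \emph{increases} by one.

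Now I would plug these monotone changes into the chain of inequalities in~(\ref{Requirements}). For the pair $(x,y)$: the lower bound $\frac{\#\{\text{strong }x\text{ over }y\}}{m} \le p_{xy}$ still holds because the numerator only decreased. The upper bound $p_{xy} \le \frac{\#\{\text{weak }x\text{ over }y\}}{m}$ still holds because that numerator is unchanged. For the pair $(y,x)$: the lower bound involves the strong $y$-over-$x$ count, which is unchanged, so it still holds; the upper bound involves the weak $y$-over-$x$ count, which only increased, so it still holds. For every other pair both numerators are unchanged, so~(\ref{Requirements}) is inherited verbatim. Hence $\mathcal{V}'$ is consistent with $M$.

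I do not anticipate a genuine obstacle here — the lemma is essentially a one-line monotonicity observation — but the one point requiring care is confirming that $\succ'_u$ is still a legitimate partial order, i.e.\ that deleting the single relation $x \succ_u y$ (and nothing else) preserves transitivity. This needs a short argument: if some relation $a \succ'_u b$ were forced by transitivity through a removed comparison, the only candidate culprit is a chain $a \succeq x \succ y \succeq b$ in $\succ_u$; but such a chain would, by transitivity of $\succ_u$, already give $a \succ_u b$ directly, and deleting only the $x$-$y$ relation leaves that direct relation (and all relations not equal to $x \succ y$) intact, so no transitivity violation is introduced. (Equivalently, one may simply take $\succ'_u$ to be any partial order extending $\succ_u \setminus \{(x,y)\}$ that still lacks the pair $(x,y)$; the argument above only uses that voter $u$ is indifferent between $x$ and $y$ and retains all its other strict preferences, and Lemma~\ref{lem:mono} as stated asserts exactly such a $\succ'_u$ exists.) With that subtlety dispatched, the consistency check above completes the proof.
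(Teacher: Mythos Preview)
Your proof is correct and follows essentially the same approach as the paper: verify the rationality constraints~(\ref{Requirements}) pairwise, note that only the ordered pairs $(x,y)$ and $(y,x)$ are affected, and check that for $(x,y)$ the strong count drops while the weak count is unchanged, and for $(y,x)$ the strong count is unchanged while the weak count rises. The paper's proof is exactly this monotonicity check and nothing more; your additional paragraph on whether $\succ'_u$ remains a valid partial order goes beyond what the paper addresses (the paper simply takes $\succ'_u$ as given), though note that your transitivity argument as written does not quite cover the case where some $c$ satisfies $x \succ_u c \succ_u y$ --- in that situation no partial order identical to $\succ_u$ except on $\{x,y\}$ exists, so the lemma should be read as applying only when such a $\succ'_u$ is a partial order (equivalently, when $x$ covers $y$), which suffices for its later use via repeated application.
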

\begin{proof}
So, for any pair of candidates $i$ and $j$, $\{\succ_1,\dots,\succ_u,\dots,\succ_m\}$ satisfies the conditions (\ref{Requirements}). Namely
\begin{equation*}
\frac{\#\{v: v \text{ strongly prefers } i \text{ over } j\}}{m} 
\ \leq\  p_{ij}\  \leq\  
\frac{\#\{v: v \text{ weakly prefers } i \text{ over } j\}}{m} 
\end{equation*}
These conditions trivially still hold with respect to 
$\{\succ_1,\dots,\succ'_u,\dots \succ_m\}$ for any pair except $\{x,y\}$ and $\{y, x\}$. Let's verify that (\ref{Requirements}) still holds for these two cases as well. 
As $u$ prefers $x$ over $y$ in $\succ_u$ but 
is indifferent between $x$ and $y$ in $\succ'_v$, we have that
the number of voters that strongly prefer $x$ over $y$ has fallen
by one (namely, $u$) while the number of voters that weakly prefer $x$ over $y$ is the same.
Thus the lower bound has fallen whilst the upper bound is identical.
Hence (\ref{Requirements}) holds for $\{x,y\}$.

On the other hand, consider $\{y, x\}$. 
Now the number of voters that strongly prefer $y$ over $x$ is the same
while the number of voters that weakly prefer $y$ over $x$ has increased by one (namely, $u$).
Thus the lower bound is the same whilst the upper bound has increased.
Hence (\ref{Requirements}) holds for $\{x,y\}$ and 
$M$ is consistent with $\{\succ_1,\dots,\succ'_u,\dots \succ_m\}$.
\end{proof}

Take a finite poset $\mathcal{P}=(S,\succ)$ on a set $S$ of elements
with partial order $\succ$.
A {\em chain decomposition} of $\mathcal{P}$ is a partition of the elements of the poset into disjoint chains. The cardinalility of a chain decomposition is the number of chains in the decomposition.
A famous result of Dilworth~\cite{Dil50} states that the width of $\mathcal{P}$
is the minimum cardinality of a chain decomposition.
\begin{theorem}\cite{Dil50}\label{thm:Dilworth}
Let $\mathcal{P}=(S,\succ)$ be a finite poset. The maximum cardinality of an antichain of $\mathcal{P}$ equals the minimum cardinality of a chain decomposition of~$\mathcal{P}$. \qed
\end{theorem}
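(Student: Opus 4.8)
The plan is to prove the two inequalities of Theorem~\ref{thm:Dilworth} separately. The direction ``maximum antichain $\le$ minimum chain decomposition'' is immediate and I would dispose of it first: if $A$ is an antichain and $\mathcal{C}$ is a chain decomposition of $\mathcal{P}=(S,\succ)$, then each chain in $\mathcal{C}$ contains at most one element of $A$ (two elements of a chain are comparable, hence cannot both lie in an antichain), so $|A|\le|\mathcal{C}|$; taking $A$ of maximum cardinality and $\mathcal{C}$ of minimum cardinality gives the bound. The substantive direction is to exhibit a chain decomposition of cardinality exactly $n$, where $n$ is the maximum antichain size, and I would do this by induction on $|S|$ following Galvin's argument. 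The base case $|S|=0$ is trivial.

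For the inductive step, fix a maximal element $a$ and a minimal element $b$ with $b\preceq a$ (take $b$ and $a$ to be the bottom and top of some chain through $a$; possibly $a=b$). Let $n'$ be the maximum antichain size of $\mathcal{P}'=\mathcal{P}-\{a,b\}$. If $n'\le n-1$, then by induction $\mathcal{P}'$ decomposes into at most $n-1$ chains, and adjoining the chain $\{b\preceq a\}$ (or the singleton $\{a\}$ if $a=b$) yields a decomposition of $\mathcal{P}$ into at most $n$ chains. The interesting case is $n'=n$: let $A=\{a_1,\dots,a_n\}$ be a maximum antichain of $\mathcal{P}'$, which is then also a maximum antichain of $\mathcal{P}$. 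Put $\mathcal{P}^-=\{x\in S: x\preceq a_i \text{ for some }i\}$ and $\mathcal{P}^+=\{x\in S: x\succeq a_i \text{ for some }i\}$. Then $\mathcal{P}^-\cup\mathcal{P}^+=S$ (an element comparable to no $a_i$ would extend $A$ to an antichain of size $n+1$) and $\mathcal{P}^-\cap\mathcal{P}^+=A$ (using that $A$ is an antichain). Since $a$ is maximal and $a\notin A$ we get $a\notin\mathcal{P}^-$, so $\mathcal{P}^-\subsetneq S$; symmetrically $b\notin\mathcal{P}^+$, so $\mathcal{P}^+\subsetneq S$; and each of $\mathcal{P}^-,\mathcal{P}^+$ has maximum antichain size exactly $n$ (it contains $A$, and it cannot exceed $n$). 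By induction $\mathcal{P}^-$ splits into $n$ chains, each meeting $A$ in exactly one element, and in the chain containing $a_i$ every other element is $\preceq a_i$, so $a_i$ is its top; symmetrically $\mathcal{P}^+$ splits into $n$ chains with $a_i$ the bottom of the $i$-th. Gluing, for each $i$, the $\mathcal{P}^-$-chain topped by $a_i$ to the $\mathcal{P}^+$-chain bottomed by $a_i$ gives $n$ chains; their union is $\mathcal{P}^-\cup\mathcal{P}^+=S$, and they are pairwise disjoint since any overlap would lie in $\mathcal{P}^-\cap\mathcal{P}^+=A$ and the $i$-th glued chain meets $A$ only in $a_i$. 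This is the desired decomposition.

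The main obstacle is the bookkeeping in the case $n'=n$: one must check that $\mathcal{P}^-$ and $\mathcal{P}^+$ are \emph{proper} subsets of $S$ (so the induction hypothesis applies to them) and have maximum antichain size exactly $n$, and — most delicately — that each $a_i$ occupies the correct extreme of its chain on each side, so that the two halves glue into a genuine chain and the $n$ glued chains stay pairwise disjoint. An alternative that avoids this case analysis is Fulkerson's reduction to König's theorem: form the bipartite graph on two copies $S',S''$ of $S$ with $x'y''$ an edge iff $x\prec y$, note that a decomposition of $\mathcal{P}$ into $|S|-k$ chains corresponds to a matching of size $k$, and apply König's min-vertex-cover/max-matching duality to conclude that the minimum number of chains is $|S|$ minus the maximum matching, which one then identifies with the maximum antichain size via complementation. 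I would present the inductive proof as the primary one since it is entirely self-contained.
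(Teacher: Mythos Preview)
Your proof is correct: the easy direction is standard, and the hard direction is Galvin's inductive argument, which you have carried out carefully, including the verification that each $a_i$ sits at the top (respectively bottom) of its chain in $\mathcal{P}^-$ (respectively $\mathcal{P}^+$) and that the glued chains are pairwise disjoint via $\mathcal{P}^-\cap\mathcal{P}^+=A$.

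The comparison with the paper is short: the paper does \emph{not} prove Theorem~\ref{thm:Dilworth} at all. It is stated with a citation to Dilworth~\cite{Dil50} and closed with a \qed{} symbol, i.e., it is invoked as a classical result and used as a black box (to derive Theorem~\ref{thm:chains} and Corollary~\ref{cor:chains}). So your approach is not ``different'' from the paper's so much as strictly more than what the paper does. If the goal is to mirror the paper, you should simply cite Dilworth and move on; if the goal is a self-contained write-up, your Galvin-style proof is a good choice, and the Fulkerson/K\"onig alternative you sketch is also fine.
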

Theorem~\ref{thm:Dilworth} allows us to restrict our attention to voter preferences composed of disjoint chains.
\begin{theorem}\label{thm:chains}
Let $M$ be $\alpha$-rationalizable by $\{\succ_1,\dots,\succ_u,\dots,\succ_m\}$.
Then $M$ is $\alpha$-rationalizable by $\{\succ_1,\dots,\succ'_u,\dots \succ_m\}$, where $\succ'_u$ consists of at most $\alpha$ disjoint chains.
\end{theorem}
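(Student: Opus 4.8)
The plan is to apply Dilworth's theorem (Theorem~\ref{thm:Dilworth}) to the poset $\mathcal{P}_u$ induced by $\succ_u$. Since $M$ is $\alpha$-rationalizable by the given set of voters, voter $u$ is $\alpha$-rational, so the maximum antichain in $\mathcal{P}_u$ has cardinality at most $\alpha$. By Theorem~\ref{thm:Dilworth}, $\mathcal{P}_u$ therefore admits a chain decomposition into at most $\alpha$ disjoint chains $C_1, \dots, C_t$ with $t \le \alpha$. I will let $\succ'_u$ be the partial order whose comparabilities are exactly those \emph{within} each chain $C_\ell$; that is, $x \succ'_u y$ if and only if $x$ and $y$ lie in the same chain $C_\ell$ and $x \succ_u y$. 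One checks quickly that $\succ'_u$ is indeed a (strict) partial order — irreflexivity and asymmetry are inherited from $\succ_u$, and transitivity holds because if $x \succ'_u y$ and $y \succ'_u z$ then all three lie in the same chain and $x \succ_u z$ follows from transitivity of $\succ_u$ — and by construction it consists of the $t \le \alpha$ disjoint chains $C_1, \dots, C_t$.

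The remaining point is that replacing $\succ_u$ by $\succ'_u$ preserves consistency with $M$. The key observation is that $\succ'_u$ is obtained from $\succ_u$ by deleting comparabilities only: every pair comparable in $\succ'_u$ is comparable (the same way) in $\succ_u$, and the pairs that become incomparable are exactly those $\{x,y\}$ lying in different chains of the decomposition. So $\succ'_u$ is reached from $\succ_u$ by a finite sequence of single-pair "forgetting" steps, each of which turns one strong preference into an indifference. Each such step is exactly the operation analyzed in Lemma~\ref{lem:mono}, which guarantees that consistency with $M$ is maintained. Applying Lemma~\ref{lem:mono} repeatedly — once for each deleted comparability — yields that $\{\succ_1, \dots, \succ'_u, \dots, \succ_m\}$ is consistent with $M$, and since $\succ'_u$ has width at most $\alpha$ (its antichains have size at most $t \le \alpha$) and the other voters are unchanged, the set $\alpha$-rationalizes $M$.

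I do not anticipate a serious obstacle here; the proof is essentially a bookkeeping exercise combining Dilworth's theorem with iterated application of Lemma~\ref{lem:mono}. The one point requiring a little care is verifying that $\succ'_u$ is genuinely a partial order and that its width is at most $\alpha$ (rather than, say, accidentally creating a larger antichain) — but this is immediate from the fact that its comparability structure is precisely the disjoint union of the chains $C_1, \dots, C_t$, whose maximum antichain picks at most one element from each chain and hence has size at most $t \le \alpha$. A minor subtlety worth stating explicitly is that the order in which the single-pair deletions are performed is irrelevant, since each intermediate partial order remains a valid partial order (deleting a comparability from a partial order can never violate transitivity), so Lemma~\ref{lem:mono} applies at every step.
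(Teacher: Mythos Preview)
Your proof is essentially identical to the paper's: both apply Dilworth's theorem to extract a chain decomposition of $\succ_u$, define $\succ'_u$ as the union of those chains, and then invoke Lemma~\ref{lem:mono} repeatedly to pass from $\succ_u$ to $\succ'_u$.

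One correction is warranted, however. Your parenthetical claim that ``deleting a comparability from a partial order can never violate transitivity'' is false. For instance, take the total order $a\succ b\succ c\succ d$ with chain decomposition $\{a,d\}$ and $\{b,c\}$; deleting the cross-chain relation $a\succ c$ first leaves $a\succ b$ and $b\succ c$ but not $a\succ c$, so transitivity fails at that intermediate stage. Fortunately this does not damage the argument: the proof of Lemma~\ref{lem:mono} is a pure counting check on the constraints~(\ref{Requirements}) and never uses transitivity, so it applies verbatim to arbitrary binary relations and hence to every intermediate step. (Equivalently, you can bypass the iteration entirely and argue directly that $\succ'_u\subseteq\succ_u$, so for each pair the left-hand side of~(\ref{Requirements}) can only decrease and the right-hand side can only increase.) The paper's own proof glosses over this same point, so your level of rigor matches it; just drop the incorrect justification.
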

\begin{proof}
Let $\succ'_u$ correspond to a minimum cardinality chain decomposition of
$\mathcal{P}=([n],\succ_u)$. Thus, by Theorem~\ref{thm:Dilworth}, 
$\succ'_u$ consists of at most $\alpha$ disjoint chains.
Thus the width of $\succ_u$ and $\succ'_u$ are the same, and voter $u$ is still $\alpha$-rational.

Next we know $M$ is consistent with $\{\succ_1,\dots,\succ_u,\dots \succ_m\}$.
We can now use the monotonicity property.
Repeatedly applying Lemma~\ref{lem:mono}, we conclude that 
$M$ is consistent with $\{\succ_1,\dots,\succ'_u,\dots \succ_m\}$,
as desired.
\end{proof}

Of course, repeated application of Theorem~\ref{thm:chains} implies we can assume that every voter has a partial order that is a collection of disjoint 
chains.
\begin{corollary}\label{cor:chains}
If $M$ is $\alpha$-rationalizable then it is consistent with a set of $\alpha$-rational voters whose partial orders are 
a collection of (at most) $\alpha$ chains.~\qed
\end{corollary}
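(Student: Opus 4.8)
The plan is to obtain the corollary by a straightforward induction on the number $m$ of voters, using Theorem~\ref{thm:chains} as the inductive engine. Fix a set $\mathcal{V} = \{\succ_1,\dots,\succ_m\}$ of $\alpha$-rational voters that $\alpha$-rationalizes $M$; such a set exists by hypothesis. I would show, by induction on $t \in \{0,1,\dots,m\}$, that $M$ is $\alpha$-rationalizable by a set $\{\succ^{(t)}_1,\dots,\succ^{(t)}_m\}$ of $\alpha$-rational voters in which the partial orders of the first $t$ voters are each a collection of at most $\alpha$ disjoint chains, while voters $t+1,\dots,m$ are unchanged from $\mathcal{V}$. The case $t=m$ is exactly the statement of the corollary.

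The base case $t=0$ is just the hypothesis, with no voter yet required to be a chain collection. For the inductive step, suppose the claim holds for some $t < m$, witnessed by a voter set $\{\succ^{(t)}_1,\dots,\succ^{(t)}_m\}$. Apply Theorem~\ref{thm:chains} to this set with $u = t+1$: it yields a set $\{\succ^{(t)}_1,\dots,\succ'_{t+1},\dots,\succ^{(t)}_m\}$ that still $\alpha$-rationalizes $M$ and in which $\succ'_{t+1}$ consists of at most $\alpha$ disjoint chains. Crucially, Theorem~\ref{thm:chains} modifies only the voter indexed $t+1$ and leaves every other voter untouched, so voters $1,\dots,t$ remain collections of at most $\alpha$ chains. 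Relabelling this new set as $\{\succ^{(t+1)}_1,\dots,\succ^{(t+1)}_m\}$ completes the step.

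There is essentially no obstacle here; the only point requiring attention is maintaining, as an invariant throughout the induction, that the current voter set genuinely $\alpha$-rationalizes $M$ — this is precisely the hypothesis needed to invoke Theorem~\ref{thm:chains}, and it is preserved by the conclusion of that theorem. After $m$ applications, every voter's partial order is a collection of at most $\alpha$ chains, which is the desired conclusion.
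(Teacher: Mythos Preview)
Your proposal is correct and follows exactly the approach the paper indicates: the corollary is stated with a \qed, and the paper simply remarks that ``repeated application of Theorem~\ref{thm:chains}'' gives the result. Your induction just formalizes this repeated application, and the one subtlety you flag---that each invocation of Theorem~\ref{thm:chains} leaves all other voters unchanged so that previously-converted voters remain chain collections---is precisely what makes the iteration go through.
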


For an application, consider again {\tt Example II}. Recall the preference matrix $M$ is $2$-rationalizable using two voters. By Corollary~\ref{cor:chains}, it must then be consistent
with two $2$-rational voters whose partial orders each consist of two disjoint chains. To do this we simply find a minimum chain decomposition of partial orders for the red and blue voters in {\tt Example II}. This gives us the two voters illustrated in Figure~\ref{ex_chains}.

\begin{figure}[h]
\begin{minipage}{0.4\textwidth}
$$M = \begin{bmatrix}
    0 & 1 & \frac{1}{2}&\frac{1}{2} &\frac{1}{2} &0 \\
    0 & 0 & 1 &\frac{1}{2} &\frac{1}{2} &\frac{1}{2}\\
    \frac{1}{2} & 0 & 0 &1 &\frac{1}{2} &\frac{1}{2}\\
    \frac{1}{2}&\frac{1}{2} & 0 &0 &1 &\frac{1}{2}\\
    \frac{1}{2}&\frac{1}{2} &\frac{1}{2} & 0 &0 &1\\
    1&\frac{1}{2} &\frac{1}{2} &\frac{1}{2} & 0 &0
\end{bmatrix}$$
\end{minipage}
\hspace{0.2cm}
\begin{minipage}{0.25\textwidth}
\scalebox{0.8}{
    \begin{tikzpicture}[->,>=stealth',shorten >=1pt,auto,node distance=1.4cm,thin,
                    blank node/.style={circle,draw=black,font=\sffamily\Large\bfseries}]

      \node[blank node] (1) at (90:1.5) {$1$};
      \node[blank node] (2) at (30:1.5) {$2$};
      \node[blank node] (3) at (-30:1.5) {$3$};
      \node[blank node] (4) at (-90:1.5) {$4$};
      \node[blank node] (5) at (-150:1.5) {$5$};
      \node[blank node] (6) at (150:1.5) {$6$};
      \path
    (1) edge (2)
    (2) edge (3)
    (3) edge (4)
    (4) edge (5)
    (5) edge (6)
    (6) edge (1)
    (1) edge [-,dashed] (3)
    (1) edge [-,dashed] (4)
    (1) edge [-,dashed] (5)
    (2) edge [-,dashed] (4)
    (2) edge [-,dashed] (5)
    (2) edge [-,dashed] (6)
    (3) edge [-,dashed] (6)
    (3) edge [-,dashed] (5)
    (4) edge [-,dashed] (6);
    \end{tikzpicture}}
\end{minipage} 
\begin{minipage}{0.25\textwidth}
\scalebox{0.85}{
   \begin{tikzpicture}[->,>=stealth',shorten >=1pt,auto,node distance=2.5cm,thick,
                        blank node/.style={font=\sffamily\Large\bfseries}]

        \node[blank node] (1) at (0,0) {$1$};
        \node[blank node] (2) [right=.5cm of 1]  {$2$};
        \node[blank node] (3) [below=.5cm of 1] {$3$};
        \node[blank node] (4) [right=.5cm of 3] {$4$};
        \node[blank node] (5) [below=.5cm of 3] {$5$};
        \node[blank node] (6) [right=.5cm of 5] {$6$};
        \node[blank node] (7) at (2,0) {$5$};
        \node[blank node] (8) [right=.5cm of 7]  {$6$};
        \node[blank node] (9) [below=.5cm of 7] {$3$};
        \node[blank node] (10) [right=.5cm of 9] {$4$};
        \node[blank node] (11) [below=.5cm of 9] {$1$};
        \node[blank node] (12) [right=.5cm of 11] {$2$};

        \node[red] () at (0.6,0.75) {Voter 1};
        \draw[rounded corners, red] (-0.3, 0.5) rectangle (1.5, -2.75) {};
        \node[blue] () at (2.6,0.75) {Voter 2};
        \draw[rounded corners,blue] (1.7, 0.5) rectangle (3.5, -2.75) {};
        
        \path
            (1) edge (3)
            (2) edge (4)
            (3) edge (5)
            (4) edge (6)
            (7) edge (9)
            (8) edge (10)
            (9) edge (11)
            (10) edge (12);
    \end{tikzpicture}}
\end{minipage} 
\caption{A $2$-rationalizable matrix consistent with two voters whose partial orders are disjoint chains.}
\label{ex_chains}
\end{figure}
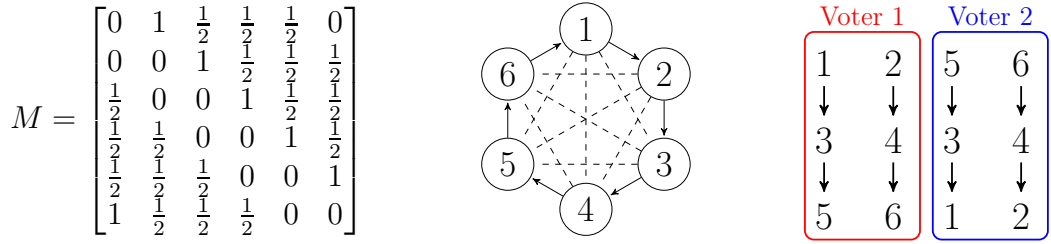

\section{Half-Integral Preference Matrices}\label{sec:half}

We now restrict attention to the class $\mathcal{M}^{\frac12}$ of half-integral matrices. This class is important in the field as it is the simplest class of preference matrices for which no characterization of rationality is known.\footnote{There is a simple characterization for the class $\mathcal{M}^{0/1}$ of integral matrices. An integral preference matrix $M$
is rationalizable ($1$-rationalizable) if and only if its
voting graph $D_M$ is acyclic.}
Recall, associated with a half-integral preference matrix is
a directed {\em voting graph} $D_M$ 
and an undirected {\em unanimity graph} $G_M$.
In this section, we will see that understanding the 
chromatic number of the unanimity graph is critical in
understanding the $\alpha$-rationalizability of the class of
half-integral preference matrices.

\subsection{Rationality and the Unanimity Graph}
We begin with a simple reduction that allows us to restrict the computation of the rationality number
of $M$ to its computation in a collection of submatrices based upon the structure of the unanimity graph $G_M$.

\begin{theorem}\label{thm:components}
Given a half-integral preference matrix $M$ with connected components  $G_1, G_2, \dots, G_t$ in the unanimity graph $G_M$. For each, $1\le \ell \le t$, let $M_{\ell}$ be the sub-matrix of $M$ induced by 
$G_{\ell}$. Then
\begin{equation*}\label{eq:components}
\alpha(M) \ =\  \max_{\ell} \, \alpha(M_\ell)
\end{equation*}
\end{theorem}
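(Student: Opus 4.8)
The plan is to prove the equality by establishing the two inequalities separately, each via a construction that glues together or splits apart voter sets. Throughout, by Corollary 1 I may assume every voter's partial order is a disjoint union of chains, which makes both directions cleaner.

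For the inequality $\alpha(M) \ge \max_\ell \alpha(M_\ell)$: the key observation is that if $\mathcal{V}$ is a set of $\alpha(M)$-rational voters consistent with $M$, then for each component $G_\ell$, restricting every voter's partial order to the candidate set $V(G_\ell)$ yields a set of voters consistent with $M_\ell$. Restriction to a subset of elements can only decrease (or preserve) the width of a poset, so each restricted voter is still $\alpha(M)$-rational; and the rationality constraints $(\ast)$ for pairs $i,j \in V(G_\ell)$ depend only on the restricted preferences. Hence $\alpha(M_\ell) \le \alpha(M)$ for every $\ell$, giving the bound.

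For the reverse inequality $\alpha(M) \le \max_\ell \alpha(M_\ell)$, set $\alpha^* = \max_\ell \alpha(M_\ell)$ and for each $\ell$ take a set $\mathcal{V}_\ell$ of $\alpha^*$-rational voters consistent with $M_\ell$ (inflating widths up to $\alpha^*$ is harmless — adding a chain of a single new element never increases the width beyond the max, or one can simply pad). By taking a common multiple of the $|\mathcal{V}_\ell|$ I may assume all the $\mathcal{V}_\ell$ have the same cardinality $N$. Now I build a single voter set $\mathcal{V}$ of cardinality $N$: enumerate $\mathcal{V}_\ell = \{v^\ell_1, \dots, v^\ell_N\}$ and define the $k$-th composite voter $v_k$ to have partial order equal to the disjoint union over all $\ell$ of the partial order of $v^\ell_k$ on $V(G_\ell)$, with no comparabilities added between different components. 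Because the $V(G_\ell)$ partition $[n]$, each $v_k$ is a well-defined partial order; its width is the sum of the widths of the component pieces — this is where I need to be careful.

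The main obstacle is precisely this width bound: a disjoint union of posets has width equal to the \emph{sum} of the component widths, which could be as large as $t \cdot \alpha^*$, far exceeding $\alpha^*$. The fix is to verify that across the inter-component pairs $\{i,j\}$ with $i \in V(G_\ell)$, $j \in V(G_{\ell'})$, $\ell \ne \ell'$, we have $p_{ij} = p_{ji} = \tfrac12$ (since there is no edge of $G_M$ between distinct components, by definition of $G_M$), so the constraint $(\ast)$ for such a pair is $0 \le \tfrac12 \le 1$ — it is satisfied as long as no composite voter strongly prefers $i$ over $j$ or $j$ over $i$ too often, but in fact we want the composite voters to be \emph{indifferent} across components, which they are by construction. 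So constraint satisfaction is fine, but the width is not. The resolution is to not merge all components into each voter simultaneously: instead, keep the voter sets $\mathcal{V}_\ell$ essentially separate, i.e. form $\mathcal{V} = \bigcup_\ell \mathcal{V}'_\ell$ where $\mathcal{V}'_\ell$ is $\mathcal{V}_\ell$ with each voter's domain extended to all of $[n]$ by declaring that voter indifferent to every candidate outside $V(G_\ell)$. Each such extended voter still has width exactly that of the original (an isolated element adds an antichain element... no — wait, this again adds to the width).

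Let me restate the resolution cleanly: extend each voter $v \in \mathcal{V}_\ell$ to a voter $\tilde v$ on $[n]$ by adjoining the candidates outside $V(G_\ell)$ as a \emph{chain} $c_\ell$ placed entirely above (or below) everything in $V(G_\ell)$ — concretely, fix once and for all an arbitrary linear order on $[n]\setminus V(G_\ell)$ and stack it on top. Then $\tilde v$'s poset is the ordinal sum of $v$'s poset with a chain, which has width $\max(\mathrm{width}(v), 1) = \mathrm{width}(v) \le \alpha^*$. Now $\tilde v$ strongly prefers every candidate outside $V(G_\ell)$ over every candidate inside, and has a total order outside — but that would force $p_{ij} \in \{0,1\}$ for the cross pairs and inside-$[n]\setminus V(G_\ell)$ pairs, contradicting $p=\tfrac12$. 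So ordinal sum fails on the constraints just as disjoint union fails on the width.

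The genuine resolution, which I would carry out, is: take $\mathcal{V} = \bigcup_{\ell=1}^{t} \mathcal{V}'_\ell$ where each voter in $\mathcal{V}'_\ell$ is the corresponding voter of $\mathcal{V}_\ell$ with its preferences \emph{unchanged on $V(G_\ell)$ and indifferent to all candidates outside} $V(G_\ell)$; such a voter's poset is $v$'s poset plus $|[n]\setminus V(G_\ell)|$ isolated points, so its width is $\mathrm{width}(v) + |[n]\setminus V(G_\ell)|$ — still too big. Hmm.

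I will instead argue as follows, which does work: by the component structure, it suffices to rationalize each $M_\ell$ by voters on $V(G_\ell)$ and then \emph{combine voters from different components into the same physical voter only when this does not blow up the width} — but since combining is never forced, the cleanest correct statement is to interleave. Use voter set $\mathcal{V}$ of size $\sum_\ell N = tN$: for each $\ell$ and each $k \in [N]$, put one voter whose partial order on $V(G_\ell)$ equals that of $v^\ell_k$ and who is indifferent to everything outside $V(G_\ell)$. Then for a pair $i,j$ in the same component $G_\ell$: the voters from other components are all indifferent on $\{i,j\}$, so the fraction strongly preferring $i$ to $j$ among all $tN$ voters equals (fraction among the $N$ voters of block $\ell$)$/t \le p_{ij}/t \le p_{ij}$, and similarly the weak-preference fraction is $\ge p_{ij}/t + (t-1)/t \ge p_{ij}$ when $p_{ij}\ge \tfrac12$, and the symmetric bound handles $p_{ij}<\tfrac12$ — so $(\ast)$ holds. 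For cross pairs, all voters are indifferent, so $(\ast)$ reads $0\le \tfrac12\le 1$. And each physical voter has width $\max(\mathrm{width}(v^\ell_k),\, |[n]\setminus V(G_\ell)|)$? No: isolated points each form their own antichain with everything, so the width is $\mathrm{width}(v^\ell_k) + |[n]\setminus V(G_\ell)|$.

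OK — the \emph{actual} fix, and what I'd write: make the "outside" candidates a single long chain appended as isolated-from-nothing... The point I keep circling is that isolated points inflate width and ordinal sums inflate constraints. The correct trick is that Lemma 1 (monotonicity) and Corollary 1 let us assume chain decompositions, and a voter on $V(G_\ell)$ using $\le \alpha^*$ chains can be extended to $[n]$ by lengthening those same chains with the outside candidates (distributing $[n]\setminus V(G_\ell)$ arbitrarily among the $\le\alpha^*$ chains and appending at the bottom), keeping the chain count $\le \alpha^*$, hence width $\le \alpha^*$ by Dilworth. This voter now strongly prefers outside-candidates below inside-ones along each chain, but \emph{across} chains outside-candidates remain incomparable, and — crucially — I then apply Lemma 1 to make this extended voter indifferent on \emph{every} pair not inside $V(G_\ell)$, which only relaxes the constraints. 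The resulting voter has width $\le \alpha^*$ (monotonicity in Lemma 1 never increases width) and is indifferent outside $V(G_\ell)$. Using $N$ copies per component as above, the union of size $tN$ satisfies all of $(\ast)$ by the averaging argument, so $\alpha(M) \le \alpha^* = \max_\ell \alpha(M_\ell)$. Combined with the first inequality, this gives equality. The main obstacle, as the above discussion shows, is organizing the extension of component-voters to all of $[n]$ so that the width stays $\le \max_\ell\alpha(M_\ell)$ while the half-integral cross-component entries remain satisfiable; the chain-decomposition viewpoint of Corollary 1 together with the monotonicity Lemma 1 is exactly what resolves it.
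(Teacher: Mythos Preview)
Your first inequality $\alpha(M)\ge\max_\ell\alpha(M_\ell)$ via restriction is fine and matches the paper.

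The second direction has a genuine gap. Your final ``actual fix'' appends the outside candidates to the $\le\alpha^*$ chains (good: width stays $\le\alpha^*$) and then invokes Lemma~\ref{lem:mono} to delete every comparability involving a candidate outside $V(G_\ell)$. But Lemma~\ref{lem:mono} only says that deleting a comparability preserves \emph{consistency with $M$}; it says nothing about width, and your parenthetical ``monotonicity in Lemma~1 never increases width'' is false. Deleting those comparabilities turns every outside candidate into an isolated point, and you are back exactly at the construction you already rejected: the resulting voter has width at least $|[n]\setminus V(G_\ell)|+1$. More fundamentally, no voter that is indifferent on every pair outside $V(G_\ell)$ can have width $\le\alpha^*$ once $|[n]\setminus V(G_\ell)|\ge\alpha^*$, because the outside candidates by themselves form an antichain. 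So the ``indifferent outside the component'' strategy is unworkable, and the averaging over $tN$ voters (whose constraint-checking is correct) cannot be paired with a valid width bound.

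The paper resolves this by going in the opposite direction: make the composite voters \emph{totally comparable} across components (so antichains are trapped inside a single component and width stays $\le\alpha^*$), and recover the cross-component value $p_{ij}=\tfrac12$ by symmetry. Concretely, for every tuple $S=(v_1,\dots,v_t)$ with $v_\ell$ a voter for $M_\ell$, build two voters $L^S,R^S$ that both copy $v_\ell$ inside $V(G_\ell)$ but rank the components in opposite linear orders; over the $2\prod_\ell m_\ell$ voters, cross-component pairs split exactly $\tfrac12$--$\tfrac12$ and within-component fractions are preserved exactly. That pairing idea is the missing ingredient in your argument.
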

\begin{proof}
Given a set of $\alpha$-rational voters that satisfy the conditions (\ref{Requirements}) for every pair of candidates in $M$. Then, for each $1\le \ell \le t$, the same set of voters trivially satisfy  (\ref{Requirements}) for each pair of candidates in $V_\ell$, the set of candidates restricted to the sub-matrix $M_{\ell}$. Thus $\alpha(M) \ge \alpha(M_\ell)$
and, hence, $\alpha(M)\ge \max\limits_{{\ell}} \alpha(M_\ell)$.

So it remains to prove the harder direction, that 
$\alpha(M)\le \max\limits_{\ell} \alpha(M_\ell)$.
Assume, for each $1\le \ell \le t$, that $M_{\ell}$ is consistent with a collection of $m_{\ell}$ voters who are each $\alpha$-rational.
We claim that $M$ is consistent with a collection of $2\cdot\prod_{\ell=1}^t m_{\ell}$ voters who are each $\alpha$-rational.

To prove this, we create two new voters, $L^S$ and $R^S$, for every set $S=\{v_1,v_2, \dots, v_t\}$ of voters, where $v_\ell$ is one of the $n_{\ell}$ voters used to $\alpha$-rationalize $M_{\ell}$.
Note that voter $v_{\ell}$ has a partial order on the candidates in $V_{\ell}$.
Both voters $L^S$ and $R^S$ will copy the partial order $v_\ell$ has on the set $V_\ell$, for all $1\le \ell \le t$.
That is if candidates $i$ and $j$ are both in $V_\ell$
then $L^S$ and $R^S$ comparatively rank $i$ and $j$ exactly how $v_\ell$ does.

But what if candidate $i \in V_{\ell}$ and candidate $j \in V_{\gamma}$, where $\ell \neq \gamma$?
Imagine an ordering of the sets of candidates $\{V_1, V_2,\dots, V_t\}$
from left to right. Then voter $L^S$ will prefers sets from left to right, and voter $R^S$ will prefers sets from right to left.
That is, if $\gamma < \ell$, then $L^S$ prefers $i$ over $j$
and $R^S$ prefers $j$ over $i$.

So in total we have created $2\cdot\prod_{\ell=1}^t m_{\ell}$ new voters. Moreover each of these new voters is $\alpha$-rational.
This is because both $L^S$ and $R^S$ have a strict preference
for any pair of candidates for $i \in V_{\ell}$ and candidate $j \in V_{\gamma}$, where $\ell \neq \gamma$.
Hence, any antichain of cardinality greater than one can only contain
candidates within the same set $V_{\ell}$.
Thus, the maximum size of an antichain in the partial order of $L^S$
(or $R^S$) is equal to the maximum size of an antichain in any of the partial orders for the set of voters $S=\{v_1,\dots, v_t\}$, which by definition is at most $\alpha$. So $L^S$ and $R^S$ are both $\alpha$-rational, for any set $S$.

Finally it remains to prove that the constraints 
(\ref{Requirements}) hold for every pair of candidates using the 
$2\cdot\prod_{\ell=1}^t m_{\ell}$ new voters.
First, take a pair of candidates $i \in V_{\ell}$ and candidate $j \in V_{\gamma}$, where $\ell \neq \gamma$.
Then since $G_{\ell}$ and $G_{\gamma}$ separate components in the 
unanimity graph $G_M$ it follows that $p_{ij}=\frac12$.
Moreover for any set $S=\{v_1, v_2,\dots, v_t\}$ 
$L^S$ and $R^S$ rank $i$ and $j$ in the opposite way.
Thus exactly half the $2\cdot\prod_{\ell=1}^t m_{\ell}$ voters strongly prefer $i$ over $j$ and half strongly prefer $j$ over $i$.
It follows that (\ref{Requirements}) holds for this pair of candidates.
Second, take a pair of candidates $i, j \in V_{\ell}$.
Recall there are $m_{\ell}$ $\alpha$-rational voters consistent with
the submatrix $M_{\ell}$. Therefore $f_1 \le p_{ij}\le f_2$,
where $f_1$ is the fraction of these $m_{\ell}$ voters that strongly prefer $i$ over $j$, and $f_2$ is the fraction of these voters that weakly prefer $i$ over $j$. But each voter in $M_{\ell}$ is selected to be $v_{\ell}$ in $S=\{v_1, v_2,\dots, v_t\}$
with exactly the same probability, namely $\frac{1}{m_{\ell}}$.
It immediately follows that among the $2\cdot\prod_{\ell=1}^t m_{\ell}$ new voters exactly an $f_1$ fraction of them strongly prefer $i$ over $j$, and exactly an $f_2$ fraction of them weakly prefer $i$ over $j$. Thus (\ref{Requirements}) holds, and $M$ is indeed
$\alpha$-rationalizable. So $\alpha(M)\ge \max\limits_{\ell} \alpha(M_\ell)$.
\end{proof}

\begin{figure}[h]
\begin{minipage}{0.35\textwidth}
$$M = \begin{bmatrix}
    0 & 1 & 0 & \frac{1}{2} & \frac{1}{2} \\
    0 & 0 & 1 & \frac{1}{2} & \frac{1}{2} \\
    1 & 0 & 0 & \frac{1}{2} & \frac{1}{2} \\
    \frac{1}{2} & \frac{1}{2} & \frac{1}{2} & 0 & 1 \\
    \frac{1}{2} & \frac{1}{2} & \frac{1}{2} & 0 & 0
\end{bmatrix}$$
\end{minipage}
\hspace{0.4cm}
\begin{minipage}{0.3\textwidth}
\scalebox{0.85}{
    \begin{tikzpicture}[->,>=stealth',shorten >=1pt,auto,node distance=2cm,thick,
                    blue node/.style={circle,draw=blue,fill=blue!20,font=\sffamily\Large\bfseries}]

      \node[blue node] (1) {$1$};
      \node[blue node] (2) [above right=0.8cm of 1] {$2$};
      \node[blue node] (3) [below right=0.8cm of 2] {$3$};
      \node[blue node] (4) [below=0.7cm of 1] {$4$};
      \node[blue node] (5) [below=0.7cm of 3] {$5$};
      \path
    (1) edge (2)
    (2) edge (3)
    (3) edge (1)
    (4) edge (5)
    (1) edge [-,dashed] (4)
    (1) edge [-,dashed] (5)
    (2) edge [-,dashed] (4)
    (2) edge [-,dashed] (5)
    (3) edge [-,dashed] (4)
    (3) edge [-,dashed] (5);
    \end{tikzpicture}}
\end{minipage} 
\begin{minipage}{0.3\textwidth}
\scalebox{0.85}{
    \begin{tikzpicture}[>=stealth',shorten >=1pt,auto,node distance=2cm,thick,
                    blue node/.style={circle,draw=blue,fill=blue!20,font=\sffamily\Large\bfseries}]

      \node[blue node] (1) {$1$};
      \node[blue node] (2) [above right=0.8cm of 1] {$2$};
      \node[blue node] (3) [below right=0.8cm of 2] {$3$};
      \node[blue node] (4) [below=0.7cm of 1] {$4$};
      \node[blue node] (5) [below=0.7cm of 3] {$5$};
    \draw (1)-- (2) -- (3) -- (1);
    \draw (4) -- (5);
    \end{tikzpicture}}
\end{minipage} 
    \caption{A half-integral preference matrix with its voting graph and unamimity graph.}
    \label{ex_reduction_cc1}
\end{figure}
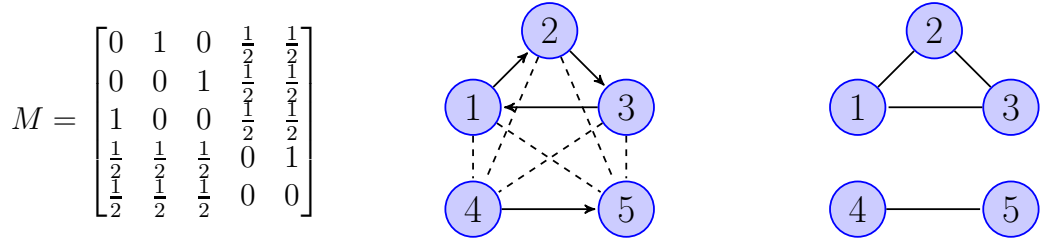
\noindent{\tt Example IV.}
Consider the half-integral preference matrix $M$ shown in Figure~\ref{ex_reduction_cc1}, along with its voting graph $D_M$ and unanimity graph $G_M$.
Observe that the unanimity graph has exactly two components
on the candidate sets $V_1=\{1,2,3\}$ and $V_2=\{4,5\}$.
We can prove that $M$ is $2$-rationalizable by applying the
method of Theorem~\ref{thm:components}. The submatrix $M_1$ induced by $V_1$ simply corresponds to the $3$-cycle we saw in {\tt Example I}. Thus $M_1$ is $2$-rationalizable with $m_1=1$ voter with preference $1$ over $2$ and
an indfference between candidate $3$ and the other two candidates.
The submatrix $M_2$ induced by $V_2$ corresponds to an arc $(4,5)$.
This is trivially $1$-rationalizable (and, hence, $2$-rationalizable) with $m_2=1$ voter with preference $4$ over $5$.
So we only need $2\cdot m_1\cdot m_2 = 2$ new voters to $2$-rationalize $M$. Let these voters be $L$ and $R$. Following the proof of Theorem~\ref{thm:components}, let $L$ and $R$ copy the preferences within each of $V_1=\{1,2,3\}$ and $V_2=\{4,5\}$. 
Now $L$ prefers any vertex in $V_1$ over any vertex in $V_2$,
but $R$ has the opposite preference. Thus we obtain the two voters shown in Figure~\ref{ex_LR}.

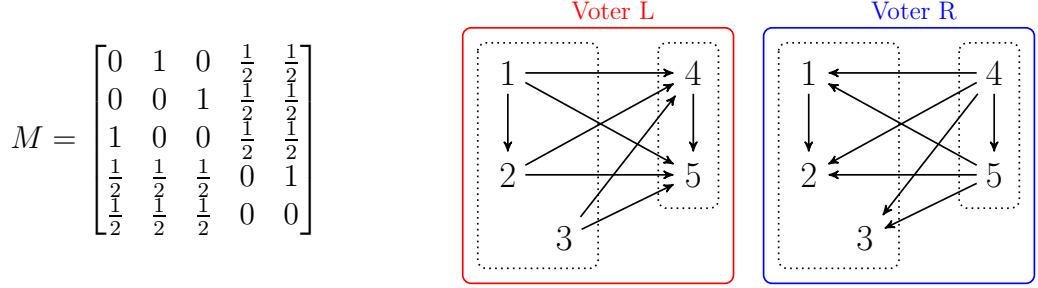
\begin{figure}[h]
\begin{minipage}{0.45\textwidth}
$$M = \begin{bmatrix}
    0 & 1 & 0 & \frac{1}{2} & \frac{1}{2} \\
    0 & 0 & 1 & \frac{1}{2} & \frac{1}{2} \\
    1 & 0 & 0 & \frac{1}{2} & \frac{1}{2} \\
    \frac{1}{2} & \frac{1}{2} & \frac{1}{2} & 0 & 1 \\
    \frac{1}{2} & \frac{1}{2} & \frac{1}{2} & 0 & 0
\end{bmatrix}$$
\end{minipage}
\begin{minipage}{0.65\textwidth}
\scalebox{0.8}{
 \begin{tikzpicture}[->,>=stealth',shorten >=1pt,auto,node distance=2.5cm,thick,
                    blank node/.style={font=\sffamily\Large\bfseries}]

      \node[blank node] (1) {$1$};
      \node[blank node] (2) [below=1cm of 1] {$2$};
      \node[blank node] (3) [below right=.5cm of 2] {$3$};
      \node[blank node] (4) [right=2.5cm of 1] {$4$};
      \node[blank node] (5) [below=1cm of 4] {$5$};
      \path
    (1) edge (2) 
    (1) edge (4)
    (1) edge (5)
    (2) edge (4)
    (2) edge (5)
    (3) edge (4)
    (3) edge (5)
      (4) edge (5);
     \node[red] () at (1.75, 0.75) [above] {Voter L};
        \draw[rounded corners,red] (-0.75, 0.75) rectangle (3.75, -3.5) {};
           \draw[rounded corners, dotted, black] (-0.5, 0.5) rectangle (1.5, -3.25) {};
        \draw[rounded corners, dotted, black] (2.5, 0.5) rectangle (3.5, -2.25) {};
        
   \node[blank node] (6) at (5,0) {$1$};
      \node[blank node] (7) [below=1cm of 6] {$2$};
      \node[blank node] (8) [below right=.5cm of 7] {$3$};
      \node[blank node] (9) [right=2.5cm of 6] {$4$};
      \node[blank node] (10) [below=1cm of 9] {$5$};
      \path
    (6) edge (7) 
    (9) edge (6)
    (10) edge (6)
    (9) edge (7)
    (10) edge (7)
    (9) edge (8)
    (10) edge (8)
      (9) edge (10);
     \node[blue] () at (6.75, 0.75) [above] {Voter R};
        \draw[rounded corners,blue] (4.25, 0.75) rectangle (8.75, -3.5) {};
           \draw[rounded corners, dotted, black] (4.5, 0.5) rectangle (6.5, -3.25) {};
        \draw[rounded corners, dotted, black] (7.5, 0.5) rectangle (8.5, -2.25) {}; 
    \end{tikzpicture}}
    \end{minipage}
\caption{The matrix $M$ is $2$-rationalizable using two voters.}
\label{ex_LR}
\end{figure}

\subsection{An Upper Bound on the Rationality Number of Half-Integral Matrices}

We can also use the unanimity graph to bound the rationality number of half-integral matrices. 
First we give an upper bound. 
The rationality number $\alpha(M)$ is upper bound by the
chromatic number $\c(G_M)$ of its unanimity graph.

\begin{lemma}\label{lem:upper-half}
If $M$ is a half-integral preference matrix then
$\alpha(M) \le \c(G_M)$.
\end{lemma}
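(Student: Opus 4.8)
The plan is to take a proper coloring of the unanimity graph $G_M$ with $\c(G_M)$ colors and use it to construct a set of voters, each of width at most $\c(G_M)$, that is consistent with $M$. Write $k = \c(G_M)$ and fix a proper $k$-coloring of $G_M$, partitioning the candidates into color classes $C_1, \dots, C_k$. Each color class is an independent set in $G_M$, so for any two candidates $i,j$ in the same class we have $p_{ij} = p_{ji} = \tfrac12$. The key idea is that within a color class the voters may be indifferent, while across color classes we need to realize the half-integral or unanimous values correctly by choosing orderings of the classes.

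First I would handle the cross-class structure. Contract each color class to a single super-candidate and observe that the arcs of the voting graph $D_M$ all go between distinct color classes (there are no arcs inside a class). So I would try to build a small family of linear orders on the $k$ color classes such that, for every ordered pair of classes $(C_a, C_b)$, the fraction of these linear orders placing $C_a$ before $C_b$ equals the appropriate target value dictated by $M$; for a unanimous pair ($p_{ij}=1$ for $i \in C_a$, $j \in C_b$) every order must put $C_a$ first, and for a half pair exactly half the orders must put $C_a$ first. Each voter is then obtained from one such linear order $C_{\pi(1)} \succ C_{\pi(2)} \succ \cdots \succ C_{\pi(k)}$ by making all candidates within a class mutually incomparable and ranking candidates in earlier classes above those in later classes; such a voter is a union of $k$ antichains stacked in a chain, hence has width at most $\max_a |C_a| \le $ ... wait, that is not bounded by $k$.

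So the construction must be refined: instead of making each color class a flat antichain, I would further order the candidates \emph{inside} each class arbitrarily (say by index), turning each class into a chain; then the voter's partial order is a disjoint union of $k$ chains — one per color class — but with the added cross-class comparabilities from the class-order, the whole thing is a single chain? No: within-class I make them a chain, but do I add cross-class comparabilities? If I do, I get a total order (width $1$), which would force unanimous-looking behavior on half pairs. The resolution is: a voter built from a class-order $\pi$ should rank candidate $x$ above candidate $y$ \emph{iff $x,y$ lie in different classes and $x$'s class precedes $y$'s class}, and leave all same-class pairs incomparable. This partial order has width exactly equal to the largest color class, which need not be $\le k$. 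To fix the width I would instead use Corollary~\ref{cor:chains} in reverse — or better, observe we are free to \emph{split} each voter: replace a single such voter whose within-class sets are large by \emph{one} voter per choice of representative... Actually the clean fix is that each color class is itself a candidate set of an induced subproblem that is all $\tfrac12$'s, which is $1$-rationalizable trivially \emph{but} we are allowed a separate chain per class: make the voter's poset the disjoint union of $k$ chains $C_1, C_2, \dots, C_k$ (each class internally totally ordered, but with \emph{no} comparabilities between classes). Such a voter has width exactly $k$. Then across classes everything is incomparable, giving fraction $0$ strongly preferring either way — which handles half pairs but \emph{not} unanimous pairs, since a unanimous pair needs someone to strictly prefer.

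Hence the final construction I would propose: take two voters per linear order $\pi$ of the color classes — call them $L^\pi$ and $R^\pi$ — exactly mirroring the component-gluing trick of Theorem~\ref{thm:components}. Voter $L^\pi$ internally totally-orders each class $C_a$ (by index) and ranks all of $C_{\pi(1)}$ above all of $C_{\pi(2)}$ above $\cdots$, i.e. is a \emph{total order}; voter $R^\pi$ uses the reverse class-order. But this gives width $1$, again wrong for half-pairs. The genuine point, which I now see is the crux, is that unanimous pairs and half pairs can both be served \emph{simultaneously} by orienting \emph{only} the unanimous comparabilities consistently and the half ones in balanced pairs: direct the voting graph $D_M$ restricted to cross-class arcs; since $D_M$'s arcs respect the coloring and... the obstacle is that $D_M$ need not be acyclic even after contracting color classes. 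Overcoming this — showing that for any half-integral $M$ one can find, for each color class, a chain ordering plus a balanced family of cross-class orders realizing exactly the entries of $M$ — is the main obstacle, and I expect the proof to route through: (1) reduce to $G_M$ connected via Theorem~\ref{thm:components}; (2) within a connected $G_M$, take a proper $k$-coloring and a spanning structure letting us consistently orient unanimous edges; (3) for each voter use a poset that is a union of $k$ chains respecting both the within-class index order and the oriented unanimous edges, taking a balanced collection of such posets over all "directions" so the $\tfrac12$ entries come out exactly right. The hard part will be verifying that such a balanced collection exists and that each resulting poset genuinely has width $\le k$ rather than something larger.
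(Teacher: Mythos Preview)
Your proposal contains a genuine misunderstanding of the rationality constraints that sends you off on an unnecessarily complicated path. You correctly arrive at the construction ``make the voter's poset the disjoint union of $k$ chains $C_1,\dots,C_k$ (each class internally totally ordered, with no comparabilities between classes)'', which has width exactly $k$. You then reject it because ``a unanimous pair needs someone to strictly prefer''. This is false. For $p_{ij}=1$ the constraint (\ref{Requirements}) reads
\[
\frac{\#\{v:\ i\succ_v j\}}{|\mathcal V|}\ \le\ 1\ \le\ \frac{\#\{v:\ v \text{ weakly prefers } i \text{ over } j\}}{|\mathcal V|},
\]
so what is required is only that \emph{every} voter weakly prefer $i$ over $j$; indifference is perfectly fine. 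Thus cross-class indifference handles \emph{both} the unanimous and the half cross-class pairs simultaneously.

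The actual problem with a single such voter is the one you did not flag: within a colour class all pairs have $p_{ij}=\tfrac12$, and your single chain gives a strict preference, so the fraction strongly preferring is $1>\tfrac12$. The paper's fix is immediate: use exactly two voters, both with the disjoint-union-of-$k$-chains structure, but with \emph{reversed} total orders inside each colour class. Then for any same-class pair exactly one of the two voters strongly prefers each direction, giving $\tfrac12\le\tfrac12\le\tfrac12$; for any cross-class pair both voters are indifferent, giving $0\le p_{ij}\le 1$. That is the entire proof. All of your final paragraph---reducing to connected $G_M$, orienting unanimous edges, balanced families of class-orders---is unneeded once the constraint for unanimous pairs is read correctly.
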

\begin{proof}
Given $M$ let $\c(G_M)=k$ be the chromatic number of its unanimity graph $G_M$.
Take any $k$-colouring of the candidates, namely the vertices in $G_M$. Let $C_\ell$ be the set of candidates receiving colour $\ell$, for $1\le \ell\le k$.
We will show that $M$ is $k$-rationalizable using just two voters. The construction is simple. 
Both voters will have $k$ chains in their partial order and are thus $k$-rational. There is a chain for each colour class $C_\ell$. Voter $1$ places the candidates in $C_\ell$ in an arbitrary total order to generate its $\ell$th chain, for $1\le \ell\le k$.
Voter $2$ does the same thing, except it chooses exactly the opposite total order for the candidates in $C_\ell$ to generate its $\ell$th chain. 

Let's verify that this construction satisfies the rationality constraints
(\ref{Requirements}). Take a pair of candidates $i$ and $j$. There are two cases to consider.
First, assume that both candidates belong to the same colour class, $i, j\in C_{\ell}$.
Because each colour class is an independent set we have that
$(i,j)\notin G_M$ and so $p_{ij}=\frac12$.
But, without loss of generality, voter $1$ prefers $i$ over $j$ and voter $2$ prefers $j$ over $i$ as their chains are reversals of each other. Hence, half of the voters strongly prefer 
$i$ over $j$ and half of the voters weakly prefer 
$i$ over $j$. So $\frac12 \le p_{ij}=\frac12\le \frac12$ and
(\ref{Requirements}) holds for this pair of candidates.

Second, assume the candidates belong to different colour classes, $i\in C_{\ell}$ and $j\in C_{\gamma}$, where
 $\ell\neq \gamma$. But this means $i$ and $j$ are in different chains in the partial orders of both voter $1$ and voter $2$.
 Consequently, the fraction of voters that strongly prefer 
$i$ over $j$ is zero and the fraction of voters that weakly prefer $i$ over $j$ is one. Thus, regardless of the value of $p_{ij}$, we have $0 \le p_{ij}=\frac12\le 1$ and
(\ref{Requirements}) holds for this pair of candidates.
Thus $M$ is $k$-rationalizable and so $\alpha(M) \le \c(G_M)$.
\end{proof}

\noindent{\tt Example V.}
Consider the half-integral preference matrix $M$ shown in Figure~\ref{ex_chromatic_number}.
This has a two-colourable unanimity graph, $\c(G_M)=2$.
It has colour classes green and pink with $V_{green}=\{2,3,5\}$ 
and $V_{pink}=\{1,4\}$. Let voter $1$ order the two corresponding chains by preferring lower numbered candidates,
and let voter prefer higher numbered candidates.
This produces the partial orders illustrated in Figure~\ref{ex_chromatic_number}. These two voters prove that $M$ is $2$-rationalizable.

\begin{figure}[h]
\begin{minipage}{0.35\textwidth}
$$M = \begin{bmatrix}
    0 & 1 & 0 & \frac{1}{2} & 1 \\
    0 & 0 & \frac{1}{2} & \frac{1}{2} & \frac{1}{2} \\
    1 & \frac{1}{2}  & 0 & \frac{1}{2} & \frac{1}{2} \\
    \frac{1}{2} & \frac{1}{2} & \frac{1}{2} & 0 & 1 \\
    0 & \frac{1}{2} & \frac{1}{2} & 0 & 0
\end{bmatrix}$$
\end{minipage}
\hspace{0.5cm}
\begin{minipage}{0.3\textwidth}
\scalebox{0.8}{
    \begin{tikzpicture}[>=stealth',shorten >=1pt,auto,node distance=2.5cm,thick,
                    red node/.style={circle,draw=red,fill=red!20,font=\sffamily\Large\bfseries},
                    green node/.style={circle,draw=green,fill=green!20,font=\sffamily\Large\bfseries}]

      \node[red node] (1) {$1$};
      \node[green node] (2) [above right=0.8cm of 1] {$2$};
      \node[green node] (3) [below right=0.8cm of 2] {$3$};
      \node[red node] (4) [below=0.7cm of 1] {$4$};
      \node[green node] (5) [below=0.7cm of 3] {$5$};
      \path
    (1) edge (2)
    (1) edge (5)
    (3) edge (1)
    (4) edge (5);
    \end{tikzpicture}}
\end{minipage} 
\begin{minipage}{0.25\textwidth}
\scalebox{0.85}{
    \begin{tikzpicture}[->,>=stealth',shorten >=1pt,auto,node distance=2.5cm,thick,
                    blank node/.style={font=\sffamily\Large\bfseries}]
        
      \node[blank node] (2) {$2$};
      \node[blank node] (3) [below=0.6cm of 2] {$3$};
      \node[blank node] (5) [below=0.6cm of 3] {$5$};
      \node[blank node] (1) [right=0.6cm of 3] {$1$};
      \node[blank node] (4) [below=0.6cm of 1] {$4$};
      \node[red] () at (0.75,0.75) {Voter 1};
        \draw[rounded corners, red] (-0.4, 0.5) rectangle (1.5, -2.95) {};
      \path
    (2) edge (3)
    (3) edge (5)
    (1) edge (4);
    \end{tikzpicture}
    \quad
    \begin{tikzpicture}[->,>=stealth',shorten >=1pt,auto,node distance=2.5cm,thick,
                    blank node/.style={font=\sffamily\Large\bfseries}]
        
      \node[blank node] (5) {$5$};
      \node[blank node] (3) [below=0.6cm of 5] {$3$};
      \node[blank node] (2) [below=0.6cm of 3] {$2$};
      \node[blank node] (4) [right=0.6cm of 3] {$4$};
      \node[blank node] (1) [below=0.6cm of 4] {$1$};
           \node[blue] () at (0.75,0.75) {Voter 2};
      \draw[rounded corners, blue] (-0.4, 0.5) rectangle (1.5, -2.95) {};
      \path
    (5) edge (3)
    (3) edge (2)
    (4) edge (1);
    \end{tikzpicture}}
\end{minipage}
    \caption{A $2$-chromatic unanimity graph inducing two $2$-rational voters consistent with its half-integral preference matrix $M$.}
    \label{ex_chromatic_number}
\end{figure}
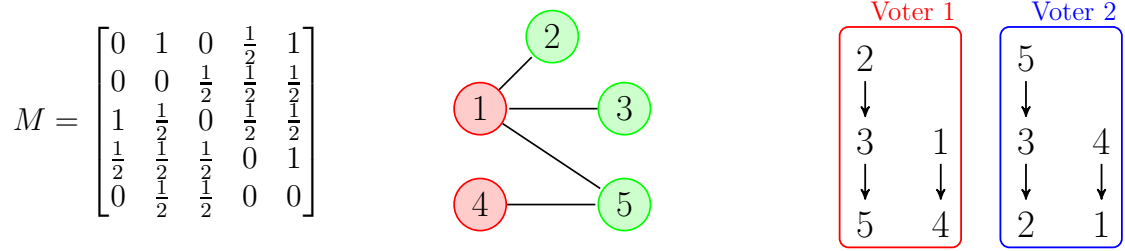

\subsection{A Lower Bound on the Rationality Number of Half-Integral Matrices}

Let's now show that the upper bound in Lemma~\ref{lem:upper-half} is tight (to within a constant factor) for the class
of half-integral preference matrices.

Take any integral preference matrix $M$. Its voting graph is a {\em tournament}. If this tournament is acyclic then the matrix is rationalizable, that is, it is $\alpha(M)=1$. But its unanimity graph is a clique so has chromatic number $\c(G_M)=n$.

Because an integral matrix is trivially half-integral, this example shows that is {\bf not} the case that
$\alpha(M)=\Omega(\c(G_M))$ for every half-integral matrix.
However, the upper bound is indeed tight in the following sense:
for any $k\in \mathbb{N}$, there exists a half-integral matrix $M$ whose unanimity graph has chromatic number $\c(G_M)=k$
and whose rationality number is at least $\Omega(k)$.

\begin{lemma}\label{lem:lower-half}
Given $k\in \mathbb{N}$, there exists a half-integral matrix $M$ with $\c(G_M)=k$ and $\alpha(M)\ge \frac15\cdot k$.
\end{lemma}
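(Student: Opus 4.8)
The goal is to exhibit, for each $k$, a half-integral matrix $M$ whose unanimity graph $G_M$ has chromatic number exactly $k$ but whose rationality number $\alpha(M)$ is $\Omega(k)$. The natural candidate is a matrix whose unanimity graph is itself hard to colour, and the obvious choice is to take $G_M$ to be a graph with chromatic number $k$ but small clique number — the point being that the $\Omega(k)$ lower bound on $\alpha(M)$ must come from the \emph{global} colouring obstruction, not from a local clique (a clique would correspond to a tournament, which as the preceding discussion notes need not force $\alpha$ to be large). So I would let $G_M$ be (the underlying graph of) a suitable triangle-free, or bounded-clique, graph of chromatic number $k$ — for instance a Kneser graph or a Mycielskian — and orient its edges to form the voting graph $D_M$, with $p_{ij}=\tfrac12$ on all non-edges.

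The heart of the argument is to show that any $\alpha$-rationalization of $M$ forces $\alpha \geq k/5$. By Corollary~\ref{cor:chains}, I may assume each voter's partial order is a union of at most $\alpha$ disjoint chains. The key observation is the following: for an edge $ij$ of $G_M$ with $p_{ij}=1$, \emph{every} voter must weakly prefer $i$ over $j$, and at least a positive fraction (in fact: no voter may strictly prefer $j$ over $i$, and since $p_{ij}=1$, strictly more than a $0$ fraction must strictly prefer $i$). More usefully, since $p_{ij}=1$ means the upper bound in (\ref{Requirements}) is $1$ — vacuous — but the lower bound forces: the fraction strictly preferring $i$ over $j$ is $\le 1$ and the fraction strictly preferring $j$ over $i$ is $0$. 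So no voter puts $j$ above $i$ on any edge oriented $i\to j$. Now for each voter, contract/restrict attention to its chain decomposition: each chain is a set of candidates that the voter linearly orders, and this linear order must be \emph{consistent} with the orientation $D_M$ on every edge within that chain (edges inside a chain are comparable, and the comparison direction is forced by $D_M$). Hence each chain induces an acyclic subtournament-compatible linear order — in particular, each chain, viewed in $G_M$, is a set on which $D_M$ restricted to edges has no "backward" edge. The punchline: averaging over all voters and all their (at most $\alpha$) chains, if $\alpha$ were too small relative to $k$, one could colour $G_M$ by assigning to each vertex the index of the chain it lies in for a suitably chosen single voter — except that doesn't immediately give an independent set. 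So instead I'd argue via a fractional/probabilistic colouring: pick a voter uniformly at random; its chain decomposition partitions $V$ into $\le \alpha$ chains; within a chain, two $G_M$-adjacent vertices \emph{can} coexist, so chains are not colour classes directly.

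The right move, then, is: within each chain the voter's order must agree with $D_M$, so a chain cannot contain a \emph{directed cycle} of $D_M$. If I build $D_M$ so that its underlying graph has chromatic number $k$ \emph{and} every "long" vertex subset contains a directed cycle — e.g., take $D_M$ to be a blow-up or iterated construction of directed triangles arranged so that the dichromatic number is large — then each chain has bounded "independence-like" size, and $\alpha$ many chains must cover all $n$ vertices, forcing $\alpha$ large. Concretely I expect the clean route is: choose $M$ so that $G_M$ is a graph $H$ with $\chi(H)=k$ and with the property that $H$ has no large "transitively orientable under $D_M$" subset; then each voter's chains are such subsets, $\alpha$ of them cover $V$, and a counting/colouring argument (assigning colours from a palette of size $5\alpha$ by refining the chain partition of a single cleverly chosen voter, using that pairs inside a chain that are $G_M$-adjacent are rare enough to 2-colour or 5-colour each chain) yields $\chi(H)\le 5\alpha$, i.e. $\alpha \ge k/5$. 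The factor $5$ strongly suggests: each chain of a voter, when restricted to $G_M$, is $5$-colourable (perhaps it is an interval-like or path-like structure, hence even $2$- or $3$-degenerate, but the constant $5$ leaves slack), so $\alpha$ chains give a $5\alpha$-colouring of $G_M$.

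The main obstacle is pinning down exactly why a single chain of a voter induces a graph on $G_M$ of bounded chromatic number (the source of the constant $5$): a chain is a linearly ordered set consistent with $D_M$, and the $G_M$-edges inside it are exactly the comparable pairs — a priori that could be a large clique, which is \emph{not} $5$-colourable. So the construction of $M$ must rule this out, presumably by taking $G_M$ with bounded clique number and additionally controlling how linear orders interact with it; I would therefore choose $H$ to be triangle-free with $\chi(H)=k$ (Mycielski or shift graph), show that inside any voter-chain the induced subgraph of $H$ is a forest or near-forest (because a consistently-$D_M$-oriented triangle-free graph that is linearly ordered has no two "crossing" edges, giving bounded pathwidth hence small chromatic number), and then conclude $\chi(H) \le 5\alpha$. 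Verifying the correspondence between the chain structure and this degeneracy bound, and checking the orientation can be chosen so $\chi(G_M)$ is exactly $k$ while the matrix stays half-integral and valid, is the technical core; the averaging step afterward is routine.
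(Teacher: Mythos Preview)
Your proposal circles the correct idea but does not land on it, and the primary route you commit to has a genuine gap.

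The paper's construction is quite different from yours. It takes $G_M$ to be the \emph{complete $k$-partite graph} with parts of size $n/k$ (so $\chi(G_M)=k$ trivially), and then orients each edge uniformly at random. The probabilistic step shows that, for $n$ large, with positive probability \emph{every} vertex set of size $5n/k$ contains a directed triangle. The counting is concrete: any such set $S$ contains at least $(n/k)^2$ edge-disjoint triangles of $G_M$ (found by a merging trick plus Hall's theorem between two large parts), each is directed with probability $1/4$ independently, and a union bound over all $\binom{n}{5n/k}$ choices of $S$ finishes it. Since a voter's chain must be acyclic in $D_M$, every chain has fewer than $5n/k$ vertices, and covering $n$ vertices forces at least $k/5$ chains. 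That is the source of the constant~$5$.

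Your main line---take $G_M$ triangle-free with $\chi(G_M)=k$ and argue that each voter-chain induces a $5$-colourable subgraph of $G_M$, so $\alpha$ chains yield a $5\alpha$-colouring---does not work as stated. Nothing prevents $D_M$ from being acyclic, in which case a single voter with one chain (the full acyclic order) is consistent with $M$, that chain induces \emph{all} of $G_M$, and the induced subgraph has chromatic number $k$, not $5$. So the ``each chain is $5$-colourable'' claim is false without further control on the orientation. You do briefly state the correct mechanism (``every long vertex subset contains a directed cycle $\Rightarrow$ chains are short $\Rightarrow$ $\alpha$ is large''), but you then abandon it for the colouring-of-chains approach and never supply a construction achieving that cycle property. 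Note also that a triangle-free $G_M$ makes the paper's directed-triangle counting impossible (there are no triangles at all), so your choice of host graph is at odds with the one argument in your sketch that would actually succeed; the dense complete-multipartite graph is what makes the probabilistic bound go through.
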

\begin{proof}
 Take any $k \in \N$. 
 We build a unanimity graph $G_M$ on $n$ vertices composed of
$k$ disjoint independent sets $\{C_1, C_2,\dots, C_k\}$, where
 $|C_\ell|=\frac{n}{k}$, for all $1\le \ell\le k$.
(Note we will determine below how large $n$ needs to be.) 
In $G_M$ every pair of candidates belonging to different independent sets is connected by an edge. Because any set of vertices from distinct independent sets forms a clique, it follows that $\c(G_M)=k$.

We now need to construct a voting graph $D_M$ and a half-integral preference matrix $M$ corresponding to this unanimity graph $G_M$. To do this we apply the probabilistic method.
We uniformly at random orient each edge in $G_M$ independently to produce $D_M$ (and thus $M$).

We prove the following key property. With non-zero probability, every subset $S \subseteq V(D_M)$ of size $5\frac{n}{k}$ contains a directed triangle. 
Let $S$ be an arbitrary subset of the vertices of size $5\frac{n}{k}$ and let $S_\ell = S\cap C_\ell$, for all $1\le \ell\le k$. Set $m_\ell = |S_\ell|$, the number of vertices of $S$ contained in $C_\ell$.
We desire a lower bound on the number of edge-disjoint triangles contained within $S$. 
To do this we apply the following ``merge'' operation.
Given $S_\ell$ and $S_\gamma$, where $\ell\neq \gamma$,
we set create an independent set $S_\ell\cup S_\gamma$
by removing the arcs between vertices in $S_\ell$ and $S_\gamma$.
Clearly, by removing arcs we cannot increase the number of triangles. So any lower bound we obtain after applying this operation applies to the original instance.

Our goal is to obtain (at least) two independent sets of size at least $\frac{n}{k}$. So if $m_1 < \frac{n}{k}$, we merge other sets into $S_1$
until $m_1\ge \frac{n}{k}$. 
Observe that at this point we must have $m_1< 2\cdot \frac{n}{k}$. We then repeat this process on another set, say $S_2$, until $\frac{n}{k} \le m_2 < 2\cdot \frac{n}{k}$. 
There are now at least $5\frac{n}{k} - m_1 - m_2 \geq \frac{n}{k}$ vertices remaining. So $m_3+\cdots+ m_k\ge \frac{n}{k}$. 
    
We can now show that there are at least $\left( \frac{n}{k} \right)^2$ edge-disjoint triangles in $S$. 
First observe that every vertex in $S_1$ is adjacent to every vertex in $S_2$. 
By deleting vertices we may assume $m_1=m_2=\frac{n}{k}$.
Then, by repeatedly applying Hall's theorem, 
we can find $\frac{n}{k}$ edge-disjoint perfect matchings between $S_1$ and $S_2$.
Note that each of these perfect matchings has cardinality 
$\frac{n}{k}$.

Next, by deleting vertices, we may assume $\bar{S}=S_3\cup\cdots S_k$ contains exactly $\frac{n}{k}$ vertices, that is, 
$\sum\limits_{i \geq 3} m_i = \frac{n}{k}$. 
Now pair each perfect matching with a different vertex $v$ from $\bar{S}$. Since $v$ is adjacent in $G_M$ to each vertex in $S_1\cup S_2$ this creates $\frac{n}{k}$ disjoint triangles.
Because there are $\frac{n}{k}$ perfect matchings paired to  
$\frac{n}{k}$ distinct vertices in $\bar{S}$, this gives a total
of $(\frac{n}{k})^2$ triangles in total, as claimed.

But each of these triangles is a directed $3$-cycle with probability $\frac{1}{4}$. Furthermore, as each of these triangles are edge-disjoint, these are independent events. So the probability that $S$ contains no directed cycle, which is less than the probability that none of our triangles are directed, is upper bounded as follows.
$$\mathbb{P}\{ S \text{ has no directed cycle} \} 
\leq \mathbb{P}\{\text{none of the triangles are directed}\} 
= \left( \frac{3}{4} \right)^{\left(\frac{n}{k}\right)^2}$$
    
Thus, by the union bound, we can show that the probability of the existence of such a subset $S$ without directed cycle is bounded away from $1$.  Specifically
    \begin{align*}
        \mathbb{P}\{\exists S \text{ with no directed cycle}\} 
        \leq \sum_{S} \mathbb{P}\{ S \text{ has no directed cycle} \}
        \leq N \times \left( \frac{3}{4} \right)^{\left(\frac{n}{k}\right)^2}
    \end{align*}
    where $N$ is the number of subsets of $V(G_M)$ of size $5 \cdot \frac{n}{k}$. Thus
    \begin{align*}
        N = \binom{n}{\frac{5n}{k}} 
        = \frac{n(n-1) \cdots (n - \frac{5n}{k} +1)}{\left(\frac{5n}{k}\right)!}
        \leq \frac{n^{\frac{5n}{k}}}{\left(\frac{5n}{ke}\right)^{\frac{5n}{k}}}
        = \left(\frac{ke}{5}\right)^{\frac{5n}{k}}
    \end{align*}
    where in the denominator we used the fact that $t! \geq \left(\frac{t}{e}\right)^t$ for any integer $t$, by Stirling's formula. Hence:
    \begin{align*}
        N \times \left( \frac{3}{4} \right)^{\left(\frac{n}{k}\right)^2} 
        \leq \left(\frac{ke}{5}\right)^{\frac{5n}{k}} \cdot \left( \frac{3}{4} \right)^{\left(\frac{n}{k}\right)^2} 
    \end{align*}
    which is strictly less than 1 for $n$ large enough. To see this, raising this expression to the power $\frac{k}{n}$ we have $\left(\frac{ke}{5}\right)^{5} \cdot \left( \frac{3}{4} \right)^{\frac{n}{k}} $
    which goes to zero as $n \rightarrow \infty$. Thus, we conclude that 
    \begin{align*}
        \mathbb{P}\{ \text{every subset } S \text{ has a cycle}\} 
        = 1 - \mathbb{P}\{\text{there exists subset }S \text{ with no cycle}\} > 0.
    \end{align*}
Therefore, by the probabilistic method, there exists an orientation of $G_M$ for which every subset $S \subseteq V(G_M)$ of size $5\cdot\frac{n}{k}$ contains a directed triangle. This implies that the longest chain any voter can have in its partial order is of length at most $5\frac{n}{k}$.
Consequently each voter must have at least
$\frac{k}{5}$ chains in their partial order. This proves that this orientation $D_M$ of $G_M$ requires that each voter be 
at best $\frac{k}{5}$-rational. Hence, for this voting graph $D_M$, we have $\alpha(M)\ge \frac{k}{5}$.
\end{proof}

Together Lemma~\ref{lem:upper-half} and Lemma~\ref{lem:lower-half} prove our first main result, Theorem~\ref{thm:half-integral}. Observe the lower and upper bounds are tight 
up to a constant factor of $5$. It is an intriguing combinatorial problem to completely close the gap between the lower and upper bounds.

\section{Integral Preference Matrices}\label{sec:integral}

Stronger results can be obtained when the preference matrix $M$ is integral, that is, $p_{ij}\in \{0,1\}$ for all $i,j$.

\subsection{One Voter Suffices}
In general, if $M$ is $\alpha$-rational more than one voter
might be required to $\alpha$-rationalize the matrix.
However, if $M$ is integral, then a single voter
can $\alpha$-rationalize it.

\begin{theorem}\label{thm:one-voter}
Let $M$ be an integral preference matrix. If $M$ is $\alpha$-rational then it is consistent with a single
$\alpha$-rational voter.
\end{theorem}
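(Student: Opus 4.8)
The plan is to show that, once $M$ is integral, the rationality constraints~(\ref{Requirements}) become purely \emph{prohibitive} — they forbid certain strict preferences but never force one — and consequently an $\alpha$-rationalizing set of voters can be pruned all the way down to a single voter without losing consistency. So I would not try to build a clever new voter; I would simply argue that any one voter from an existing $\alpha$-rationalizing set already works by itself.

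The first (and essentially only) step is the following structural observation. Let $\mathcal{V}$ be any set of voters consistent with an integral matrix $M$, let $v\in\mathcal{V}$, and suppose $p_{ij}=1$, i.e.\ $(i,j)$ is an arc of the voting graph $D_M$. Then $v$ cannot strongly prefer $j$ over $i$: otherwise the fraction of voters in $\mathcal{V}$ strongly preferring $j$ over $i$ is at least $1/|\mathcal{V}|>0$, which violates the lower bound of~(\ref{Requirements}) for the pair $(j,i)$, since there the bound is $p_{ji}=1-p_{ij}=0$. In other words, every voter's partial order is \emph{consistent with the tournament} $D_M$: whenever two candidates are comparable for $v$, the direction of that comparison agrees with the corresponding arc of $D_M$. (If one wishes, Corollary~\ref{cor:chains} lets us further assume each such $v$ is a union of at most $\alpha$ chains, each of which spans a transitive subtournament of $D_M$, though this extra structure is not needed here.)

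The second step is the verification. Since $M$ is $\alpha$-rationalizable, fix a consistent set $\mathcal{V}$ of $\alpha$-rational voters and pick an arbitrary $v\in\mathcal{V}$; it is $\alpha$-rational by hypothesis. I would then check~(\ref{Requirements}) for the singleton $\{v\}$ and every ordered pair $i\neq j$. The lower bound $\#\{\cdot\}/1\le p_{ij}$ holds because, by the observation, $v$ strongly prefers $i$ over $j$ only when $p_{ij}=1$. For the upper bound the only nontrivial case is $p_{ij}=1$, where we must have $v$ weakly prefer $i$ over $j$; but that is exactly the statement that $v$ does not strongly prefer $j$ over $i$, which is again the observation. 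Hence $\{v\}$ is consistent with $M$, and $v$ is $\alpha$-rational, which is precisely the claim.

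I do not anticipate a real obstacle: the whole argument rests on the first observation that the $0/1$ structure collapses~(\ref{Requirements}) into a one-sided "never contradict an arc of $D_M$" condition. The one thing to be careful with is the routine bookkeeping of both inequalities of~(\ref{Requirements}) for both orderings of a pair, and in particular confirming that \emph{no} constraint ever demands a strict preference — this is exactly what rules out, in the integral case, the phenomenon seen for half-integral matrices (cf.\ Example~II) where several voters are genuinely required.
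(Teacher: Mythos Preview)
Your proposal is correct and follows essentially the same approach as the paper: both arguments hinge on the observation that, for integral $M$, consistency with~(\ref{Requirements}) forces every voter in the set to never strictly contradict an arc of $D_M$, after which any single voter from the rationalizing set is seen to satisfy~(\ref{Requirements}) on its own. Your write-up is slightly more explicit in separating the two directions of~(\ref{Requirements}) during the verification, but the underlying idea and structure are the same.
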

\begin{proof}
Take an integral preference matrix $M$. Recall that its 
voting graph is then a tournament. Now, take any pair of candidates, $i$ and $j$. Without loss of generality, $p_{ij}=1$.
The rationality constraints (\ref{Requirements}) then imply that the
fraction of voters that weakly prefer $i$ over $j$ must be one.
That is, {\em every} voter must either prefer $i$ over $j$ or
must be indifferent between $i$ and $j$.

In particular, if $i\succ_v j$ in the partial order of voter $v$ then it must be the case that $p_{ij}=1$. If $p_{ij}=0$ then (\ref{Requirements}) is violated. Thus any strict preference in the partial order $\succ_v$ must agree with the
preference matrix $M$. Any indifference in the the 
partial order $\succ_v$ imposes no constraint on $p_{ij}$.

Now assume $M$ is consistent with a collection of
$m$ $\alpha$-rational voters. By the above argument,
each of these $m$ voters must satisfy (\ref{Requirements})
on its own! So each such voter suffices to 
$\alpha$-rationalize $M$.
Therefore, if $M$ is $\alpha$-rational then it is consistent with a single $\alpha$-rational voter. 
\end{proof}

\subsection{The Dichromatic Number}

Using Theorem~\ref{thm:one-voter}, we can now present a characterization of the rationality
number of an integral preference matrix $M$ in terms of a 
colouring of its voting graph $D_M$.
Now $D_M$ is a directed graph, so what do we mean by a colouring of a directed graph?
A {\em chromatic colouring} of an undirected graph $G$ is a partition of the vertices into independent sets.
A {\em dichromatic colouring} of a directed graph $D$ is a partition of the vertices into acyclic sets.
A dichromatic colouring can be viewed as a generalisation of a chromatic colouring of an undirected graph.
To see this, if we bidirect every edge in an undirected graph $G$ then an acyclic set in the resultant directed graph $D$ is an independent set in the original undirected graph.
Analogous to the chromatic number of an undirected graph, Neumann-Lara~\cite{NL82} defined the \textit{dichromatic number} $\dichi(D)$ of a digraph $D$ to be the
minimum number of colours required in any dichromatic colouring of $D$.
It is particularly important in the study of the voting graph $D_M$ of an integral preference matrix, as the following result shows:

\begin{theorem}\label{thm:dichromatic}
Let $M$ be an integral preference matrix.
Then its rationality number is equal to the dichromatic number
of its voting graph: $\alpha(M)= \dichi(D_M)$.
\end{theorem}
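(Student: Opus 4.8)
The plan is to prove the two inequalities $\alpha(M) \le \dichi(D_M)$ and $\alpha(M) \ge \dichi(D_M)$ separately, using Theorem~\ref{thm:one-voter} to reduce everything to a single voter whose partial order, by Corollary~\ref{cor:chains}, may be taken to be a disjoint union of chains.

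\textbf{Upper bound.} Suppose $\dichi(D_M) = k$, and fix a dichromatic colouring of $D_M$ into acyclic sets $A_1, \dots, A_k$. Each $A_\ell$ induces an acyclic sub-tournament of $D_M$, hence a transitive tournament, which determines a total order on the candidates in $A_\ell$; make the $\ell$-th chain of our single voter $v$ be exactly this total order. Then $\succ_v$ is a union of $k$ chains, so by Theorem~\ref{thm:Dilworth} voter $v$ is $k$-rational. To check consistency via (\ref{Requirements}): for a pair $i,j$ in the same colour class, the strict preference of $v$ between them agrees with the arc of $D_M$, hence with $M$, so both sides of (\ref{Requirements}) equal $p_{ij}$; for a pair $i,j$ in different colour classes, $v$ is indifferent, so the strong-preference fraction is $0$ and the weak-preference fraction is $1$, and (\ref{Requirements}) holds trivially for any value of $p_{ij}$. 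Thus $M$ is $k$-rationalizable and $\alpha(M) \le \dichi(D_M)$.

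\textbf{Lower bound.} Suppose $M$ is $\alpha$-rationalizable. By Theorem~\ref{thm:one-voter} there is a single $\alpha$-rational voter $v$ consistent with $M$, and by Corollary~\ref{cor:chains} (applied to the one-voter set) we may assume $\succ_v$ is a union of at most $\alpha$ chains $K_1, \dots, K_\alpha$. The chains partition the vertex set of $D_M$; I claim each $K_r$ is an acyclic set in $D_M$, which exhibits a dichromatic colouring with $\alpha$ colours and gives $\dichi(D_M) \le \alpha$. To see acyclicity: within a chain $K_r$, every pair $i,j$ is comparable in $\succ_v$, say $i \succ_v j$; since $v$ is consistent with $M$, the argument in the proof of Theorem~\ref{thm:one-voter} forces $p_{ij} = 1$, i.e. the arc $ij \in D_M$. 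Hence the orientation of $D_M$ restricted to $K_r$ is exactly the linear order given by the chain, which is transitive and in particular acyclic. Taking the minimum over all $\alpha$-rationalizing voters yields $\dichi(D_M) \le \alpha(M)$.

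Combining the two bounds gives $\alpha(M) = \dichi(D_M)$. I do not anticipate a serious obstacle here: the content is essentially a translation, the one genuine ingredient being the observation (already isolated in Theorem~\ref{thm:one-voter}) that in the integral case every strict preference of a consistent voter must agree with $M$, which is what makes chains and acyclic sets the same objects. The only mild care needed is to confirm that Corollary~\ref{cor:chains} may legitimately be invoked for a single voter — it can, since it is stated for any $\alpha$-rationalizing voter set — and to note that a finite acyclic tournament is transitive, so "acyclic set" and "linearly orderable chain" coincide.
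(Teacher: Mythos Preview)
Your proof is correct and follows essentially the same approach as the paper: both directions exploit Theorem~\ref{thm:one-voter} to reduce to a single voter, use Corollary~\ref{cor:chains} to pass to a chain decomposition, and identify chains of a consistent voter with acyclic colour classes of $D_M$. The only cosmetic difference is that you verify the rationality constraints (\ref{Requirements}) for the constructed voter explicitly (splitting into same-class and different-class pairs), whereas the paper simply invokes the observation from Theorem~\ref{thm:one-voter} that any strict preference agreeing with $M$ already suffices.
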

\begin{proof} 
Let $M$ be an integral preference matrix.
Then its voting graph $D_M$ is a tournament.  
First assume that $\dichi(D_M)=k$.
Then we can partition the vertices of $D_M$ into $k$ acyclic
subgraphs $\{C_1,C_2,\dots, C_k\}$. Take any $C_\ell$, for $1\le \ell \le k$. Then $C_\ell$ is itself an tournament.
Because it it is acyclic it has an acyclic ordering. Furthermore, this ordering is unique as $C_\ell$ is a tournament. We use acyclic ordering as total order to induce
a chain on $C_\ell$. In this way we have a partial order that consists of $k$ disjoint chains on $\{C_1,C_2,\dots, C_k\}$.
This partial order corresponds to a single $\alpha$-rational voter. Moreover if $i\succ j$ in this partial order then $p_{ij}=1$. Thus by the argument of Theorem~\ref{thm:one-voter}
this single voter is consistent with $M$ in satisfying the
constraints (\ref{Requirements}). Thus $\alpha(M)\le k$.

Second assume that $\alpha(M)=k$.
Then, by Theorem~\ref{thm:one-voter}, there is a single $\alpha$-rational voter $v$ that is consistent with $M$.
Let $\succ$ be the partial order of voter $v$.
Then, by Corollary~\ref{cor:chains}, we can assume the partial
order $\succ$ consists of exactly $k$ disjoint chains,
$\{C_1,C_2,\dots, C_k\}$.
We claim that each chain induces an acyclic subgraph in
the voting graph $D_M$.
Suppose not, then there exist candidates $i$ and $j$ such that
$i\succ j$ and $p_{ij}=0$. But this contradicts (\ref{Requirements}) since the fraction of voters that strongly prefer $i$ over $j$ is one. So voter $v$ is not consistent with $M$, a contradiction.
So $\{C_1,C_2,\dots, C_k\}$ are a partition of the vertices in $D_M$ into acyclic subgraphs. Thus $\dichi(D_M)\le k$.

Putting this together we have $\alpha(M)=\dichi(D_M)$ as desired.
\end{proof}

\noindent{\tt Example VI.}
Consider the integral preference matrix $M$ shown in Figure \ref{ex_equiv_char}. Its voter graph has dichromatic number $3$
and as illustrated is consistent with a single $3$-rational voter.
\vspace{-0.3cm}
\begin{figure}[h]
\begin{minipage}{0.4\textwidth}
$$M = \begin{bmatrix}
    0 & 1 & 0 & 1 & 0 & 1 & 0 \\
    0 & 0 & 1 & 0 & 1 & 1 & 0 \\
    1 & 0 & 0 & 1 & 0 & 0 & 1 \\
    0 & 1 & 0 & 0 & 1 & 0 & 1 \\
    1 & 0 & 1 & 0 & 0 & 1 & 0 \\
    0 & 0 & 1 & 1 & 0 & 0 & 1 \\
    1 & 1 & 0 & 0 & 1 & 0 & 0
\end{bmatrix}$$
\end{minipage}
\hspace{0.5cm}
\begin{minipage}{0.3\textwidth}
\scalebox{0.8}{
    \begin{tikzpicture}[->,>=stealth',shorten >=1pt,auto,node distance=2.5cm,thick,
        blue node/.style={circle,draw=blue,fill=blue!20,font=\sffamily\Large\bfseries},
        red node/.style={circle,draw=red,fill=red!20,font=\sffamily\Large\bfseries},
        green node/.style={circle,draw=green,fill=green!20,font=\sffamily\Large\bfseries}]

      \node[blue node] (1) at (90:1.6) {$1$};
      \node[green node] (2) at (38.6:1.6) {$2$};
      \node[blue node] (3) at (-12.9:1.6) {$3$};
      \node[blue node] (4) at (-64.3:1.6) {$4$};
      \node[red node] (5) at (-115.7:1.6) {$5$};
      \node[red node] (6) at (-167.1:1.6) {$6$};
      \node[green node] (7) at (-218.6:1.6) {$7$};
      \path
    (1) edge (2)
    (1) edge (4)
    (1) edge (6)
    (3) edge (1)
    (5) edge (1)
    (7) edge (1)
    (2) edge (3)
    (2) edge (5)
    (2) edge (6)
    (3) edge (4)
    (3) edge (7)
    (4) edge (2)
    (4) edge (5)
    (4) edge (7)
    (5) edge (3)
    (5) edge (6)
    (6) edge (3)
    (6) edge (4)
    (6) edge (7)
    (7) edge (2)
    (7) edge (5);
    \end{tikzpicture}}
\end{minipage} 
\begin{minipage}{0.2\textwidth}
\scalebox{0.85}{
    \begin{tikzpicture}[->,>=stealth',shorten >=1pt,auto,node distance=2.3cm,thick,
                    blank node/.style={font=\sffamily\Large\bfseries}]

      \node[blank node] (3) {$3$};
      \node[blank node] (1) [below=0.5cm of 3] {$1$};
      \node[blank node] (4) [below=0.5cm of 1] {$4$};
      \node[blank node] (5) [right=0.5cm of 1] {$5$};
      \node[blank node] (6) [below=0.5cm of 5] {$6$};
      \node[blank node] (7) [right=0.5cm of 5] {$7$};
      \node[blank node] (2) [below=0.5cm of 7] {$2$};
      \node[red] () at (1.5,0.75) {Voter 1};

    \draw[rounded corners, red] (-0.4, 0.5) rectangle (2.6, -2.75) {};
        
    \path
    (3) edge (1)
    (1) edge (4)
    (7) edge (2)
    (5) edge (6);
    \end{tikzpicture}}
\end{minipage}
    \caption{An integral preference matrix $M$, its $3$-dichromatic voting graph $D_M$, and a corresponding $3$-rational voter.}
    \label{ex_equiv_char}
\end{figure}
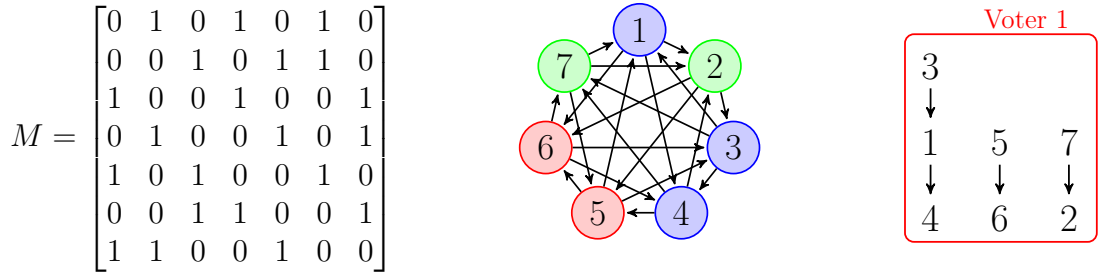
\vspace{-0.3cm}

\subsection{An Upper Bound on the Rationality Number of Integral Matrices}


For the class $\mathcal{M}^{0/1}$ of integral preference matrices we can strengthen
the bounds in Theorem~\ref{thm:half-integral}.
\begin{lemma}\label{lem:upper-integral}
If $M$ is an integral preference matrix then
$\alpha(M) \le \frac{3n}{\log n} = \frac{3\c(G_M)}{\log n}$.
\end{lemma}
\begin{proof}
By Theorem~\ref{thm:dichromatic}, to analyze the rationality number of an integral preference matrix $M$ we must
consider dichromatic colorings of the voting graph $D_M$.
In particular, any algorithm to dichromatic colour the voting graph will give an upper bound on the rationality number of $M$.
We now present a greedy algorithm which gives tight bounds
on the rationality number over the class of integral preference matrices.

The greedy algorithm selects a colour class $C_1$ as follows.
It picks the vertex $v_1$ with the highest out-degree in $D_M$
and adds it to $C_1$. It then selects the vertex $v_2$ with the
highest out-degree in the subgraph $V(D_M)\cap \Gamma^+(v_1)$, induced by the set of out-neighbours of $v_1$, and adds $v_2$
to $C_1$. It then selects the vertex $v_3$ with the
highest out-degree in the subgraph $V(D_M)\cap \Gamma^+(v_1)\cap \Gamma^+(v_2)$, induced by the set of out-neighbours of both 
$v_1$ and $v_2$, and adds $v_3$ to $C_1$.
This process terminates with $C_1=\{v_1,v_2,\dots, v_t\}$
when $\bigcap_{i=1}^t \Gamma^+(v_i) \cap V(D_M) =\emptyset$.
The vertices in $C_1$ are given the colour $1$. The algorithm
is then repeated on $D_M\setminus C_1$ to select the second colour class $C_2$, etc. The entire process terminates
when every vertex has been coloured. 

Let this greedy algorithm output the colour classes $\{C_1, C_2, \dots, C_k\}$. We claim this is a valid dichromatic colouring of $D_m$. This is true because each $C_\ell$, for $1\le \ell\le k$,
is acyclic. In particular, by construction, if the
$C_\ell=\{v_1,v_2,\dots, v_r\}$, then $v_j$ is an out-neighbour of
$v_i$, for any $i<j$. Thus $C_\ell$ contains no directed cycle.
Furthermore, we claim that $k \le \frac{3n}{\log n}$.
That is, the greedy algorithm gives a $k$-dichromatic colouring 
of $D_M$ using at most $\frac{3n}{\log n}$, where $n$
is the number of vertices (candidates).
Since, $\c(G_M)=n$, by Theorem~\ref{thm:dichromatic}, this will
prove that $\alpha(M) \le \frac{3n}{\log n} = \frac{3\c(G_M)}{\log n}$. So let's verify this claim.

To do this, observe that if there are at least $\frac{n}{2^i}$ vertices remaining when the algorithm begins to construct a new colour class $C_\ell$, then the cardinality of $C_\ell$ will be at least $\log\frac{n}{2^i}= \log n -i$. Indeed, as the graph is a tournament, at any time there is a vertex whose out-degree is at least (the floor of) half the number of vertices under consideration. 
In particular, consider the first time we have at most $\frac{n}{2^{i-1}}$ vertices remaining. Then the number of colour classes we find until the number of remaining vertices is at most 
$\frac{n}{2^{i}}$ is upper bounded by $\frac{\frac{n}{2^i}}{\log n -i}$.

If there are less than $\frac{n}{\log n}$ vertices remaining then the number of colour classes the greedy algorithm finds from that point on is trivially upper bounded by $\frac{n}{\log n}$.
Thus the total number of colour classes the greedy algorithm finds in colouring every vertex is at most
\begin{eqnarray*}
\frac{n}{\log n} + \sum_{i=1}^{\log\log n} \, \frac{\frac{n}{2^i}}{\log n -i} 
&=& \frac{n}{\log n} + n\cdot\sum_{i=1}^{\log\log n} \, \frac{1}{2^i\cdot (\log n -i)} \\
&\le& \frac{n}{\log n} + n\cdot\sum_{i=1}^{\log\log n} \, \frac{1}{2^i\cdot (\log n -\log\log n)}\\
&\le& \frac{n}{\log n} + n\cdot\sum_{i=1}^{\log\log n} \, \frac{1}{2^{i-1}\cdot \log n}\\
&\le& \frac{3n}{\log n}
\end{eqnarray*}
This gives our upper bound on the rationality number $\alpha(M)$.
\end{proof}

\subsection{A Lower Bound on the Rationality Number of Integral Matrices}

Let's now show that the upper bound in Lemma~\ref{lem:upper-integral} is tight (to within a constant factor) for the class
of integral preference matrices.

\begin{lemma}\label{lem:lower-integral}
There exists an integral matrix $M$ with 
$\alpha(M)\ge \frac{n}{2\log n +1} = \frac{\c(G_M)}{2\log n+1}$.
\end{lemma}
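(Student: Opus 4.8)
The goal is to exhibit an integral preference matrix $M$ on $n$ candidates whose rationality number is at least $\frac{n}{2\log n + 1}$. By Theorem~\ref{thm:dichromatic}, since $M$ is integral, $\alpha(M) = \dichi(D_M)$, where $D_M$ is a tournament. So the task reduces to a purely combinatorial one: construct (or prove the existence of) a tournament $T$ on $n$ vertices with dichromatic number at least $\frac{n}{2\log n + 1}$. Equivalently, since a dichromatic colouring with $k$ classes partitions the vertices into $k$ acyclic (i.e., transitive, as $T$ is a tournament) subtournaments, we need a tournament in which every transitive subtournament has size at most roughly $2\log n + 1$; then $k \ge n / (2\log n+1)$ by a pigeonhole argument on the colour classes.

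The plan is to use the probabilistic method, exactly as in Lemma~\ref{lem:lower-half} but now with the single global unanimity graph being the complete graph $K_n$. Orient each edge of $K_n$ independently and uniformly at random to obtain a random tournament $T$. A fixed set $S$ of $s$ vertices induces a transitive subtournament on all of $S$ only if the orientation restricted to $S$ is one of the $s!$ linear orders, out of $2^{\binom{s}{2}}$ equally likely orientations; so $\mathbb{P}[S \text{ is transitive}] = s!/2^{\binom{s}{2}}$. Taking a union bound over all $\binom{n}{s}$ sets of size $s$, the probability that \emph{some} $s$-set is transitive is at most $\binom{n}{s}\, s!\, / 2^{\binom{s}{2}} \le n^s / 2^{\binom{s}{2}} = 2^{s\log n - \binom{s}{2}}$. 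Choosing $s = 2\log n + 2$ (or a similar threshold) makes $\binom{s}{2} = \binom{s}{2} > s\log n$ for $n$ large, so this probability is strictly less than $1$. Hence there exists a tournament $T$ on $n$ vertices with no transitive subtournament of size $s$.

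Given such a $T$, in any dichromatic colouring every colour class is an acyclic — hence transitive — subtournament, so every colour class has at most $s-1 \le 2\log n + 1$ vertices. Since the $k$ colour classes partition the $n$ vertices, $k \ge n/(s-1) \ge \frac{n}{2\log n + 1}$. Therefore $\dichi(D_M) \ge \frac{n}{2\log n + 1}$, and by Theorem~\ref{thm:dichromatic} the integral preference matrix $M$ corresponding to $T$ satisfies $\alpha(M) \ge \frac{n}{2\log n+1}$. Finally, $\c(G_M) = n$ because $G_M$ is the complete graph for an integral matrix, which gives the stated equality $\frac{n}{2\log n+1} = \frac{\c(G_M)}{2\log n+1}$.

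I expect the only delicate point to be pinning down the exact constant and off-by-one terms in the threshold $s$: one wants $\binom{s}{2} - s\log n \to \infty$ (or just $> 0$) so that $2^{s\log n - \binom{s}{2}} < 1$, and a small amount of care is needed to land precisely on the denominator $2\log n + 1$ rather than, say, $2\log n + O(1)$ with a worse constant — in particular one should be careful about whether "$\log$" is base $2$ (it must be here) and about rounding $s$ to an integer. This is essentially the classical Erdős bound on transitive subtournaments of random tournaments, so the estimate is standard; the contribution here is the translation through Theorem~\ref{thm:dichromatic}. Everything else (the union bound, the pigeonhole on colour classes, the identification of acyclic subtournaments with transitive ones) is routine.
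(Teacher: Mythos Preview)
Your proposal is correct and follows essentially the same approach as the paper: both reduce via Theorem~\ref{thm:dichromatic} to bounding the dichromatic number of a tournament, and both use the classical Erd\H{o}s probabilistic argument (random tournament, union bound over $s$-subsets with $\mathbb{P}[S\text{ acyclic}]=s!/2^{\binom{s}{2}}$) to obtain a tournament with no transitive subtournament of size roughly $2\log n$, then pigeonhole on the colour classes. Your handling of the threshold $s=2\log n+2$ versus class size $s-1$ is in fact slightly cleaner than the paper's own off-by-one bookkeeping.
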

\begin{proof}

We claim there is a tournament with dichromatic number at least $\frac{n}{2\log n+1}$. To prove this we again apply the probabilistic method. Take a random tournament on $n$ vertices. 
Next select any subset $S$ of $k$ vertices. 
There are $k!$ acyclic orderings of a tournament on $k$ vertices
and $2^{k\choose 2}$ ways to orient the arcs in the tournament.
Thus, the probability that $S$ is a acyclic is exactly $\frac{k!}{2^{k\choose 2}}$. Furthermore
there are ${n\choose k}$ ways to choose $S$. So, by the union bound, the probability that at least one of them induces an acyclic tournament is at most
\begin{equation*}
{n\choose k}\cdot \frac{k!}{2^{k\choose 2}}
\ <\ \frac{n^k}{2^{k\choose 2}} 
\ =\ \frac{n^k}{2^{\frac12 (k-1)k}} 
\ =\ \left(\frac{n}{2^{\frac12 (k-1)}}\right)^k 
\end{equation*}
But this is less than $1$ if $2^{\frac12 (k-1)}\ge n$. That is, if $k\ge 2\log n +1$.
This implies there exists a tournament $D_M$ on $n$ vertices that contains no acyclic subgraphs of cardinality greater than $2\log n +1$. For this tournament, every dichromatic colour class has cardinality at most $2\log n +1$. Thus, its dichromatic number is
at least $\frac{n}{2\log n+1}$.
Now $D_M$ is the voting graph of an integral preference matrix $M$. So, by Theorem~\ref{thm:dichromatic}, the rationality number of $M$ is $\alpha(M)=\dichi(D_M)\ge \frac{n}{2\log n+1}$.
The result follows.
\end{proof}

Together Lemma~\ref{lem:upper-half} and 
Lemma~\ref{lem:lower-half} prove our second main result, Theorem~\ref{thm:integral}. Again, closing the factor $6$
gap between the lower and upper bounds is an interesting open problem.

\section{Computational Complexity}\label{sec:complexity}

\begin{theorem}\label{thm:hardness}
The rationality number problem is NP-complete for $k\ge 2$, even for the case of integral preferences matrices. 
\end{theorem}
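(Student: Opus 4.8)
The plan is to route everything through Theorem~\ref{thm:dichromatic}. For an integral preference matrix $M$ we have $\alpha(M)=\dichi(D_M)$, and the tournament $D_M$ is obtained from $M$ in polynomial time; hence, over integral matrices, the rationality number problem is precisely: \emph{given a tournament $D$ and $k\in\N$, is $\dichi(D)\le k$?} Membership in NP is immediate --- a certificate is a partition of $V(D)$ into $k$ classes (equivalently, via Theorem~\ref{thm:one-voter} and Corollary~\ref{cor:chains}, a single $k$-rational voter whose partial order is $k$ disjoint chains), which is checked in polynomial time by verifying that each class induces a transitive subtournament (topological sort). So the work lies in the hardness direction, and it suffices to prove, for each fixed $k\ge 2$, that deciding $\dichi(D)\le k$ for a tournament $D$ is NP-hard.

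I would treat $k=2$ first, reducing from a constraint satisfaction problem whose constraints are ``not all equal'', e.g.\ (monotone) Not-All-Equal $3$-SAT, equivalently $2$-colourability of $3$-uniform hypergraphs, which is NP-complete. A partition of a tournament into two transitive parts is read as a truth assignment (colour $1=$ true, colour $2=$ false). The basic gadget is the directed triangle: in a transitive $2$-partition it cannot be monochromatic, and moreover if two of its vertices share a colour the third is forced to the opposite colour. Using chains of such triangles I would build, for each variable $x$, a block whose occurrence-vertices for $x$ are forced to one common colour and whose occurrence-vertices for $\bar x$ to the other; and for each clause $\{\ell_1,\ell_2,\ell_3\}$ I would place a directed triangle on the three literal-vertices, so that a valid $2$-partition must avoid making the three literals equal --- exactly the NAE condition. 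For $k\ge 3$, where this Boolean encoding breaks (the triangle no longer pins the third vertex to a $2$-element palette), I would instead reduce from graph $k$-colourability: each graph vertex becomes a tournament vertex, and each edge $uv$ is replaced by a small ``difference gadget'' --- a constant-size sub-tournament on $u$, $v$ and a few new vertices --- engineered so that every dichromatic $k$-colouring restricted to the gadget gives $u$ and $v$ distinct colours, while every assignment of distinct colours to $u,v$ extends over the gadget.

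The step I expect to be the main obstacle is making these local gadgets behave globally inside a tournament: every pair of vertices not already oriented by a gadget must be oriented so that (i) no new directed cycle is created that imposes an unintended constraint, and (ii) no ``spurious'' low-dichromatic partition appears that does not correspond to a solution of the source instance. The standard remedy is to fix a linear ``backbone'' order on all the vertices, orient every non-gadget pair forward along it, and argue that then every directed cycle lies inside a single gadget; one then verifies that the obstructions to a transitive $k$-partition are exactly the gadgets and that the constraints they impose match those of the source instance. The extra subtlety, relative to ordinary graph colouring, is that a colour class here may be \emph{any} transitive subtournament rather than an independent set, so the variable blocks (for $k=2$) and the difference gadgets (for $k\ge 3$) must be designed to be rigid --- e.g.\ assembled from directed triangles and small rotational sub-tournaments --- so that their admissible colourings are exactly the intended ones and no more. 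Combining the reduction with the NP-membership above and with Theorem~\ref{thm:dichromatic} then yields that the rationality number problem is NP-complete for every $k\ge 2$, even restricted to integral preference matrices.
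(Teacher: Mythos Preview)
Your opening moves match the paper exactly: reduce to tournaments via Theorem~\ref{thm:dichromatic}, and observe that a partition into $k$ transitive subtournaments is a polynomial certificate. From there the paper takes a very different (and much shorter) route: it simply cites the literature. For $k=2$ it invokes Chen, Hu and Zang~\cite{CHZ07}, who prove NP-completeness of partitioning a tournament into two feedback vertex sets by a reduction from \textsc{Not-All-Equal-3SAT}; your $k=2$ sketch is headed towards essentially the same construction, so there is no disagreement there, only a difference in how much is redone versus quoted.

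The real divergence is at $k\ge 3$. The paper does \emph{not} reduce from graph $k$-colouring with a ``difference gadget''. Instead it uses (following Fox et al.~\cite{Fox17}) a clean self-reduction from $(k-1)$-dicolourability of tournaments to $k$-dicolourability: given a tournament $T$, form $\hat T$ from two disjoint copies $T_1,T_2$ of $T$ and a new vertex $z$, with all arcs oriented $T_1\to T_2\to z\to T_1$. One checks that $\dichi(\hat T)=\dichi(T)+1$, so hardness for $k=2$ bootstraps to every $k\ge 2$. This sidesteps entirely the obstacle you yourself flag as the hard part.

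That obstacle is genuine, and as written your $k\ge 3$ plan has a gap. In a tournament any two vertices $u,v$ form a transitive pair, so nothing local to a single gadget prevents putting both in the same colour class; a ``difference gadget'' would have to be a sub-tournament $H\ni u,v$ with $\dichi(H)=k$ in which \emph{every} optimal dicolouring separates $u$ from $v$, yet every separation of $u,v$ extends. You neither construct such an $H$ nor argue it exists, and symmetric examples (e.g.\ vertex-transitive tournaments) show that in many natural candidates every pair can be made monochromatic in some optimal dicolouring. Until that gadget is exhibited, the $k\ge 3$ reduction is a hope rather than a proof; the two-copies-plus-apex construction above is the missing idea.
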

\begin{proof}
Given an integral preference matrix $M$.
By Theorem~\ref{thm:dichromatic}, determining whether $\alpha(G)\le k$
is equivalent to deciding whether the tournament $D_M$ has dichromatic number $k$.
Consider first the case of $k=2$. Now a tournament $T$ has dichromatic number $2$ if and only if the vertices of $T$ can be partitioned into two feedback vertex sets. 
To see this, observe that the complement of a feedback vertex set induces, by definition, an 
acyclic graph. Using a reduction from \textsc{Not-All-Equal-3SAT}, Chen, Hu and Zang~\cite{CHZ07} proved that determining if a tournament can be partitioned
into two feedback vertex sets is NP-complete. 

Next consider $k\ge 3$. Fox et al.~\cite{Fox17} 
gave an alternate proof, via a reduction from the \textsc{Triangle-Free-Cut} problem, that deciding whether the tournament has dichromatic number $2$ is NP-complete. Moreover, they give a reduction from 
$(k-1)$-dicolorability to $k$-dicolorability, hence proving NP-completeness for all $k\geq 2$. The reduction is simple. Given a tournament $T$, construct a new tournament $\hat{T}$ consisting of two identical copies of $T$ and an extra vertex $z$, connected in the order $T_1 \rightarrow T_2 \rightarrow z \rightarrow T_1$. It can easily be verified that $\dichi(T)=k-1$ if and only if $\dichi(\hat{T})=k$. The theorem follows. 
\end{proof}

This hardness result indicates that it may be fruitful to search for approximation algorithms for the rationality number of a preference matrix. We remark that for the special case of integral
preference matrices with $\alpha(M)=2$ a $5$-approximation can be
derived from the work of Klingelhoefer and Newman~\cite{KN23}.
Specifically, they showed how to $10$-dicolor in polynomial time a $2$-dicolorable tournament.

\section{Acknowledgements}
We are grateful to Sophie Spirkl for showing us an elegant reduction from {\sc Monotone-Not-All-Equal-3-SAT} to the problem of deciding whether a tournament has dichromatic number 2 and to Gerardo Berbeglia for interesting discussions.

\bibliographystyle{splncs04}
\bibliography{references}

\end{document}